\documentclass[letterpaper, 10 pt, conference]{ieeeconf}  

\IEEEoverridecommandlockouts                              

\usepackage{comment}
\usepackage[backend=biber,style=ieee,maxbibnames=6]{biblatex}
\usepackage{amsmath,amssymb,amsfonts}
\usepackage{tabularx}
\PassOptionsToPackage{dvipsnames,svgnames,table}{xcolor}
\usepackage{multirow}
\usepackage{graphicx}
\usepackage{textcomp}
\usepackage{xcolor}
\usepackage{array}
\usepackage{booktabs}
\usepackage{subcaption}
\usepackage[colorlinks=true,allcolors=blue]{hyperref}
\usepackage{placeins}
\usepackage{stfloats}
\usepackage{algorithm}
\usepackage{algpseudocode}
\usepackage{capt-of}

\newtheorem{theorem}{Theorem}
\newtheorem{proposition}{Proposition}
\newtheorem{lemma}{Lemma}
\newtheorem{corollary}{Corollary}
\newtheorem{assumption}{Assumption}
\newtheorem{definition}{Definition}
\newtheorem{remark}{Remark}

\newcommand{\zc}{z^{c}}
\newcommand{\Gbar}{\bar{\Gamma}}
\newcommand{\Exp}{\mathbb{E}}
\newcommand{\Prob}{\mathbb{P}}
\newcommand{\kexc}{\kappa_{\mathrm{exc}}}
\newcommand{\cum}{\operatorname{cum}}
\newcommand{\sgn}{\operatorname{sgn}}
\newcommand{\Gmag}{\mathcal{G}_{\mathrm{mag}}}
\newcommand{\Gsgn}{\mathcal{G}_{\mathrm{sgn}}}

\newcommand{\Cov}{\operatorname{Cov}}
\newcommand{\tr}{\operatorname{tr}}
\newcommand{\xs}{\hat{x}^{s}}
\newcommand{\xa}{\hat{x}^{a}}
\newcommand{\Afil}{A_{\mathrm{f}}}
\newcommand{\Lyap}{\mathcal{L}}
\newcommand{\vsg}{\sigma}
\newcommand{\rlag}{r}                      
\newcommand{\Mset}{\mathcal{M}}            
\newcommand{\fcont}{f}                     
\newcommand{\Neff}{N_{\mathrm{eff}}}

\begin{document}

\title{When Stealth Requires Memory: Budgeted Attack Scheduling under a Whiteness Constraint}


\author{Qazi Mairaj ud din, Sidra Ghayour Bhatti, and Qadeer Ahmed%
\thanks{The authors are with the Center for Automotive Research, The Ohio
State University, Columbus, OH, USA (e-mail: mairajuddin.1@osu.edu;
bhatti.39@osu.edu; ahmed.358@osu.edu). Implementation available at: \url{https://github.com/OSU-CAR-MSL/innov-white-stealthy-attack}}}

\maketitle

\begin{abstract}
We pose stealthy attack scheduling on a sensor-to-estimator link under
a resource constraint with a budget $\Gbar$ on the fraction of corrupted transmissions and a model-free whiteness constraint on the received innovations. For a discrete-time linear plant with non-Gaussian
noise, the worst attack, innovation sign flip, preserves the innovation magnitude and is exactly stealthy against every magnitude-measurable detector, the damage-optimal schedule being a memoryless threshold; but the tail
concentration that maximizes damage also manufactures serial correlation, where an innovation-whiteness monitor gains power. We dualize the whiteness constraint and show the optimum is a threshold rule on \emph{corrected} innovation energy using memory of recent magnitudes and the previous decision. We further trace that correction to a second degree of freedom and set the damage by the \emph{location} of the firing set in magnitude space and set exposure by the \emph{boundary density} of its run structure. We realize it as a hysteresis set by one split-conformal order statistic without any plant model, using one counter and two comparisons per step. It keeps $80$--$96\%$ of the memoryless damage at $2.3$ to $71$ times lower lag-one whiteness power, validated on a real truck CAN record.

\end{abstract}

\noindent\textbf{Keywords:}
Cyber-physical systems, remote state estimation, stealthy attacks, false
data injection, attack scheduling, innovation whiteness.

\section{Introduction}
Stealthy False Data Injection (FDI) on the
sensor-to-estimator link degrades remote state estimation while evading
the residual-based monitors that guard it \cite{zhou2024cybersecurity}.
A practical adversary is resource constrained, its bandwidth and energy
limits expressed as a budget $\Gbar\in(0,1)$ on the fraction of
transmissions it may corrupt, which motivates \emph{event-based}
scheduling: the adversary acts only at instants it judges informative
\cite{han2015stochastic,zhang2015optimal,Guo2023,Wang2025}.

Two parametric ingredients recur in this literature: the firing
threshold inverts a Gaussian tail to meet the budget, and stealth is
certified by requiring the corrupted innovation to retain its nominal
covariance. Neither survives departures from Gaussianity, and real CPS
residuals are routinely non-Gaussian, nonlinearities, saturation, mode
switches and packet drops all generating occasional large innovations
\cite{zhao2021event}. A distribution-free treatment
\cite{din2026distributionfreebudgetedstealthyattack} replaces both
ingredients with three facts that this paper takes as its starting point  and restates in Sec.~\ref{sec:prelim}. The worst-case action is
the \emph{sign flip}, which preserves the innovation magnitude on every
sample path and is therefore \emph{exactly} stealthy against every
detector reading magnitudes alone, for any firing rule and any
innovation law. The induced degradation collapses into one estimable
scalar, the energy capture $\Psi$. And $\Psi$ is maximized by a
\emph{memoryless} threshold on $|z_{k}|$ whose cut point is the same
order statistic that delivers a finite-sample budget guarantee.

That certificate covers detectors measurable with respect to the magnitude process and says nothing about those reading the correlation signature: the innovation-whiteness tests standard in fault detection
\cite{mehra1971innovations,ljung1978measure}. If the innovation is \emph{conditionally sign symmetric} the sign flip preserves
the law of the entire received process and the whiteness tests are
defeated too; that condition holds automatically under Gaussian noise
and fails only off Gaussianity, the obstruction being a fourth cumulant.
When it fails, exposure is \emph{coupled to damage}: the damage-optimal
schedule fires on the largest $|z_{k}|$, and it is in the tail that the
sign--magnitude dependence lives. An adversary that maximizes
damage thereby exposes itself to a whiteness monitor, and one
that hides gives up the damage. Whether that trade-off can be
broken, and at what price, is the question answered here.

We pose budgeted scheduling with an explicit constraint on the
autocovariance of the received sequence and characterize the optimum.
Memory is a consequence of the constraint, and it becomes necessary exactly when conditional sign symmetry fails. Memory has been used before to strengthen innovation-based attacks
\cite{li2019optimal,shang2021optimal,ren2023optimal}: there it increases
damage under a stealth certificate that presumes the innovation law and
the plant matrices, and is imposed on the marginal law or on whiteness
\emph{within} an assumed detection window, which leaves the received
autocovariance unconstrained at the remaining lags. Such attacks are by
construction moving-average or autoregressive in the transmitted stream,
so a Ljung--Box monitor whose span reaches that window exposes them even
under Gaussian noise, the regime in which the sign flip is provably
invisible (Lemma~\ref{lem:boundary}). Furthermore, neither is that line budgeted: it corrupts every transmission, whereas the schedule here meets an explicit rate $\Gbar$ and is itself the object of the design. \emph{Specifically, our contributions are as follows.}
\begin{enumerate}
\item We formulate budgeted scheduling under an exposure constraint on
  the received autocovariance (Sec.~\ref{sec:statement}) and identify
  the mechanism governing it: damage and exposure are distinct
  functionals of a schedule, the \emph{location} of its firing set in
  magnitude space and the \emph{boundary density} of its run structure,
  which the memoryless class, having one degree of freedom, cannot set
  independently (Prop.~\ref{prop:decouple}). Unlike the divergence
  constraints of \cite{guo2018worst,shang2021worst,zhou2022optimal},
  which presume the innovation law, ours is estimated on the stream the
  adversary transmits. The same decomposition accounts for the failure of score randomization, and of off-the-shelf windowed scores (Rem.~\ref{rem:dither}).
\item We characterize the constrained optimum as a threshold on a score
  corrected by past magnitudes and past decisions
  (Thm.~\ref{thm:structure}), and prove it reduces to the memoryless
  benchmark if and only if the innovation is conditionally sign symmetric, which is exactly the Gaussian case (Cor.~\ref{cor:memory}). 
\item We realize that structure causally as a two-threshold rule
  (Sec.~\ref{sec:algorithm}) costing one counter and two comparisons per
  step with no plant matrix, and identify split-conformal calibration
  guarantees for the schedule with memory (Props.~\ref{prop:startrate}--\ref{prop:neff}). The proposed scheduler retains $80$--$96\%$ of the memoryless damage at $2.3$ to $71$ times lower monitor power, and cuts lag-one exposure by up to $4.4\times$ on a real truck CAN record (Sec.~\ref{sec:results}).
\end{enumerate}

\emph{Notation.} $\mathbf1\{\cdot\}$ is the indicator, $\sgn(\cdot)$ the
sign, $\sigma(\cdot)$ the generated $\sigma$-algebra and
$\overset{d}{=}$ equality in law. $\kappa_{r}(X)$ is the $r$-th cumulant
of $X$, $\cum(\cdot)$ the joint cumulant and $\kexc(X)$ the excess
kurtosis. With $\rho(\cdot)$ the spectral radius, $\Lyap(A,W)$ denotes
for $\rho(A)<1$ the unique solution of $X=AXA^{\top}+W$; subscripted,
$\rho_{\ell}$ is the lag-$\ell$ autocorrelation of the firing indicator.
Order statistics of $s_{1},\dots,s_{N}$ are written
$s_{(1)}\le\dots\le s_{(N)}$.

\section{Preliminaries and Problem Formulation}
\label{sec:prelim}

\subsection{System and Threat Model}
\label{sec:sysmodel}
Consider the discrete-time linear plant
\begin{equation}
x_{k+1}=Ax_{k}+w_{k},\qquad y_{k}=Cx_{k}+v_{k},
\label{eq:plant}
\end{equation}
with $x_{k}\in\mathbb{R}^{n}$, $y_{k}\in\mathbb{R}^{m}$, and mutually
independent zero-mean i.i.d.\ noises $w_{k},v_{k}$ of covariance
$Q\succ0$, $R\succ0$. \emph{Neither is assumed Gaussian.} With $(A,C)$
detectable and $(A,Q^{1/2})$ stabilizable the algebraic Riccati equation
has a unique stabilizing solution $\bar P$; put $S:=C\bar PC^{\top}+R$,
$K:=\bar PC^{\top}S^{-1}$ and $\Afil:=A(I-KC)$, which is Schur. A smart
sensor collocated with \eqref{eq:plant} runs the steady-state Kalman
filter, producing the posterior estimate $\xs_{k}$ and the innovation
\begin{equation}
z_{k}=y_{k}-C\xs_{k|k-1},\qquad \Exp[z_{k}z_{k}^{\top}]=S,
\label{eq:innov}
\end{equation}
with prior error $e_{k}:=x_{k}-\xs_{k|k-1}$, so that
$e_{k+1}=\Afil e_{k}+w_{k}-AKv_{k}$ and $z_{k}=Ce_{k}+v_{k}$. The remote
estimator applies the received innovation $\zc_{k}$ directly,
\begin{equation}
\xa_{k}=\xa_{k|k-1}+K\zc_{k},\qquad \xa_{k+1|k}=A\xa_{k}.
\label{eq:remote}
\end{equation}

\begin{assumption}
\label{as:A1}
\emph{(i)} The network carries the innovation and the remote node
updates by \eqref{eq:remote}. \emph{(ii)} The sensor computes
\eqref{eq:innov} from the true measurements, and $\xa_{k}$ is fed back
neither to the plant nor to the sensor filter. \emph{(iii)} The
measurement is scalar, $m=1$.
\end{assumption}

Part~(ii) makes the monitored signal \emph{exogenous}: corruption
accumulates at the remote node while the corrupted signal is generated
from the nominal stream, so no loop closes around the plant.
Lemmas~\ref{lem:magnitude} and~\ref{lem:damage} hold verbatim for $m>1$;
part~(iii) is used from \eqref{eq:admissible} onward, where it makes the
score $|z_{k}|$ and the exposure functional $\rlag_{\ell}$ scalar.

The adversary is a man-in-the-middle on the link: it observes the
nominal stream for $N$ steps and may replace transmitted packets
thereafter, as shown in Fig.~\ref{fig:attack}. It knows none of $(A,C,Q,R)$, $S$, $K$, the remote
estimator's state, the detector or its threshold, and it does not know
the innovation law. With $\gamma_{k}\in\{0,1\}$ the firing indicator,
its resource limit is the budget
\begin{equation}
\limsup_{T\to\infty}\tfrac1T\textstyle\sum_{k=1}^{T}\gamma_{k}\le\Gbar .
\label{eq:budget}
\end{equation}

\begin{figure}[t]\centering
\includegraphics[width=\columnwidth]{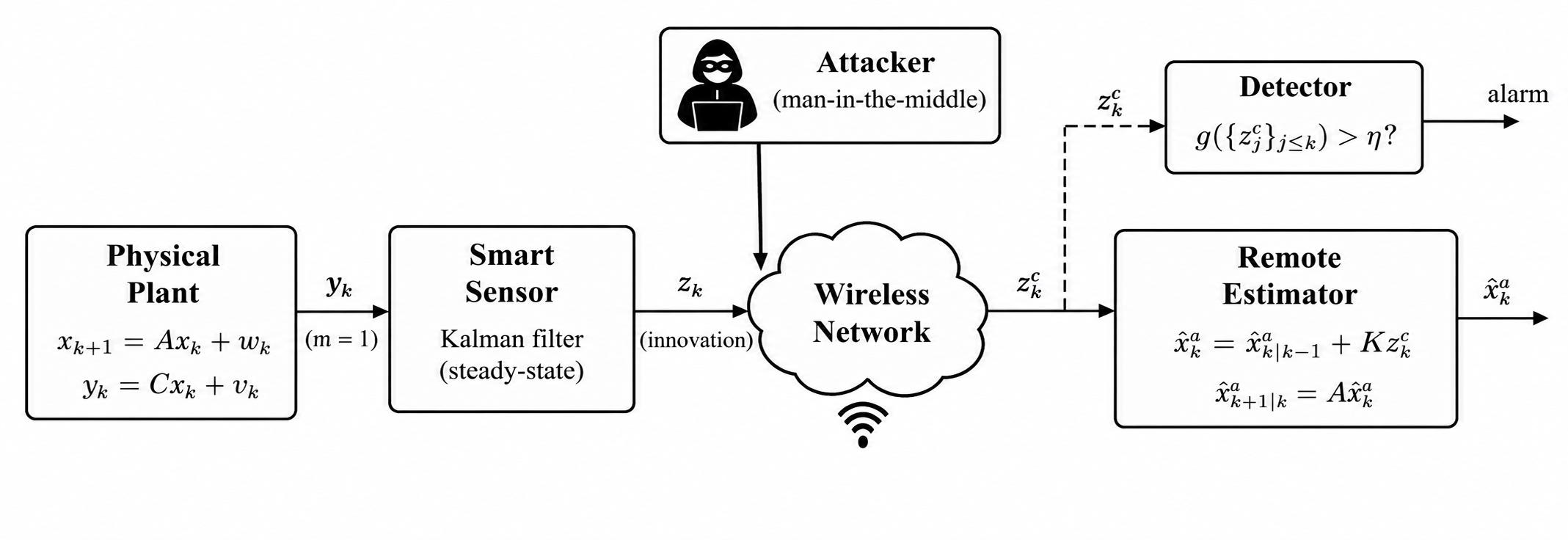}
\caption{System under Attack.}
\label{fig:attack}
\end{figure}

The link is monitored by a residual-based detector: a measurable
functional $g$ of $\{\zc_{j}\}_{j\le k}$  alarming when $g>\eta$, with $\eta$ calibrated on nominal data. The adversary does not know which detector is deployed, so we fix two classes: $\Gmag$, those measurable with respect to the magnitude process $\sigma(\|\zc_{j}\|:j\le k)$,
containing the memoryless $\chi^{2}$ test
$\zc_{k}{}^{\top}S^{-1}\zc_{k}>\eta$, its windowed and CUSUM variants
\cite{basseville1993detection} and the Serial Detector
\cite{bonczek2021detection}; and $\Gsgn$, those measurable with respect
to the full received sequence but not with respect to its magnitudes
alone --- lag-$\ell$ autocorrelation and Ljung--Box whiteness tests
\cite{mehra1971innovations,ljung1978measure}, and sign-balance tests on
$\sgn(\zc_{k})$.

\subsection{The Attack Action and Magnitude Stealth}
\label{sec:action}
Within the linear attack family $z^{a}_{k}=F_{k}z_{k}+b_{k}$,
$b_{k}\sim\mathcal{N}(0,\Sigma_{k})$ independent of $z_{k}$, under the
covariance-matching stealth constraint
$F_{k}SF_{k}^{\top}+\Cov(b_{k})=S$ standard in this literature
\cite{Guo2023,Wang2025,guo2018worst}, the maximizer of the remote error
covariance is $F^{\star}=-I$, $b\equiv0$ \cite[Thm.~3]{Guo2023}. At
firing instants the adversary therefore transmits the \emph{sign flip}
\begin{equation}
\zc_{k}=\vsg_{k}z_{k},\qquad \vsg_{k}:=1-2\gamma_{k}\in\{-1,+1\},
\label{eq:action}
\end{equation}
so that $\zc_{k}=-z_{k}$ when $\gamma_{k}=1$ and $\zc_{k}=z_{k}$
otherwise. Two features distinguish \eqref{eq:action} from the
measurement-space injections common in this literature. It is
\emph{model-free by construction}, the adversary negating a signal it
already reads, so no prediction, filter copy or covariance estimate is
needed; and it preserves magnitudes pathwise, which is the basis of its
stealth certificate.

\begin{lemma}[Magnitude stealth {\cite[Thm.~1]{din2026distributionfreebudgetedstealthyattack}}]
\label{lem:magnitude}
Under \eqref{eq:action}, $\|\zc_{k}\|=\|z_{k}\|$ for every $k$ on every
sample path, irrespective of the firing rule, the budget and the
distribution of $z_{k}$. Consequently every $g\in\Gmag$ satisfies
$g(\{\zc_{j}\})=g(\{z_{j}\})$ pathwise, and for every $\eta$ its
false-alarm rate under attack equals its nominal rate exactly.
\end{lemma}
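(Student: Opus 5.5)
The proof runs in three short steps: a pathwise norm identity, equality of the two magnitude sequences, and transfer of that equality through any $g\in\Gmag$ to its alarm events.

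\emph{Step 1 (pathwise norm identity).} By \eqref{eq:action}, $\zc_{k}=\vsg_{k}z_{k}$ with $\vsg_{k}=1-2\gamma_{k}\in\{-1,+1\}$. Absolute homogeneity of the norm gives
\[
\|\zc_{k}\|=|\vsg_{k}|\,\|z_{k}\|=\|z_{k}\|
\]
for every $k$ and every $\omega$. Nothing is assumed about how $\gamma_{k}$ is generated (deterministic, randomized, or adapted to any filtration) or about the law of $z_{k}$. The budget \eqref{eq:budget} never enters, since the identity holds even with $\gamma_{k}\equiv 1$.

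\emph{Step 2 (detector invariance).} Fix $g\in\Gmag$ at time $k$. By definition $g$ is $\sigma(\|\zc_{j}\|:j\le k)$-measurable, so by the Doob--Dynkin lemma there is a Borel map $h_{k}$ with
\[
g(\{\zc_{j}\}_{j\le k})=h_{k}(\|\zc_{1}\|,\dots,\|\zc_{k}\|).
\]
The point to make explicit is that "$g$" names a fixed functional applied to whatever stream is received. Hence the same $h_{k}$ applied to the nominal stream gives $g(\{z_{j}\})=h_{k}(\|z_{1}\|,\dots,\|z_{k}\|)$. Step 1 says the two argument vectors agree pathwise, so $g(\{\zc_{j}\})=g(\{z_{j}\})$ on every sample path.

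\emph{Step 3 (false-alarm rate).} For any $\eta$, the alarm events $\{g(\{\zc_{j}\})>\eta\}$ and $\{g(\{z_{j}\})>\eta\}$ coincide as sets. Their probabilities are therefore equal, and so is any time-averaged alarm frequency, whether taken per step, as a first-passage time, or as a long-run rate. This holds for the $\eta$ calibrated on nominal data as well as for every other threshold, which gives the stated equality of the attacked and nominal false-alarm rates.

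\emph{Main obstacle.} The only delicate point is Step 2, namely making precise that membership in $\Gmag$ means the detector acts through a fixed map of the magnitude sequence. It does not mean measurability with respect to a $\sigma$-algebra that itself shifts with the attack. Once $g$ is written as $h_{k}\circ(\|\cdot\|)$ with $h_{k}$ fixed, the rest is immediate. I would also note that the identity is in fact pathwise, which is stronger than equality in law.
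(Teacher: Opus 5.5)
Your proof is correct and takes the same route the paper relies on through the cited Thm.~1: first the pathwise identity $\|\zc_{k}\|=|\vsg_{k}|\,\|z_{k}\|=\|z_{k}\|$, then the fact that a detector in $\Gmag$ is a fixed function of the magnitude sequence, so its output and its alarm events coincide on every sample path. Your caveat on Step~2 is the right one; it is cleanest to apply Doob--Dynkin on the sequence space, to the map $x\mapsto(\|x_{1}\|,\dots,\|x_{k}\|)$, rather than on $\Omega$, since that gives a single $h_{k}$ valid for both the attacked and the nominal stream.
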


Because $|\zc_{k}|=|z_{k}|$, any magnitude score is computable from the
adversary's own \emph{output} stream, so the attack cannot corrupt its
own trigger; and because the certificate is a magnitude identity, a rule
firing on signs falls outside the guaranteed class. We therefore admit only
\emph{magnitude-measurable} schedules,
\begin{equation}
\gamma_{k}=\phi\big(|z_{k-L}|,\dots,|z_{k}|\big)\in\{0,1\},
\label{eq:admissible}
\end{equation}
for a causal window of length $L\ge0$; $L=0$ is the memoryless case. The
same $L$ indexes the depth of the exposure constraint in
\eqref{eq:cproblem} and the maximum run length of \eqref{eq:hyst}: in
each case it is the number of past instants the schedule may consult.

The action \eqref{eq:action} acts on the signs of the innovation while
\eqref{eq:admissible} reads only its magnitudes, so the dependence
between signs and magnitudes governs both what the attack achieves and
what it reveals. The case in which that dependence is absent is the
reference point for everything below.

\begin{definition}
\label{def:css}
$\{z_{k}\}$ is \emph{conditionally sign symmetric} (CSS) if, given the
entire magnitude process $\{|z_{j}|\}_{j}$, the signs $\{\sgn(z_{k})\}$
are i.i.d.\ uniform on $\{\pm1\}$.
\end{definition}

Definition~\ref{def:css} holds automatically under Gaussian noise and
fails only off it, as Lemma~\ref{lem:boundary} makes precise.

\subsection{Damage and the Memoryless Benchmark}
\label{sec:benchmark}
Let $d_{k}:=\xs_{k}-\xa_{k}$ be the divergence between the clean local
estimate and the corrupted remote one. Subtracting \eqref{eq:remote}
from the sensor recursion and using $z_{k}-\zc_{k}=2\gamma_{k}z_{k}$,
\begin{equation}
d_{k}=Ad_{k-1}+K(z_{k}-\zc_{k})=Ad_{k-1}+2\gamma_{k}Kz_{k},
\label{eq:divergence}
\end{equation}
so each firing injects a kick $2Kz_{k}$ proportional to the innovation
at that instant, and past kicks decay through $A$. We measure the attack
by $\tr(D_{k})$ with $D_{k}:=\Exp[d_{k}d_{k}^{\top}]$: this is the
component of the remote error attributable to the attack, and it
vanishes in the attack's absence.

\begin{lemma}[Damage collapses to one scalar {\cite[Thm.~3]{din2026distributionfreebudgetedstealthyattack}}]
\label{lem:damage}
Let $\rho(A)<1$, let $\{z_{k}\}$ be stationary and conditionally sign
symmetric (Def.~\ref{def:css}), let $\gamma$ satisfy
\eqref{eq:admissible} with $\Prob(\gamma_{k}=1)=\Gbar$, and let
$D_{0}=0$. Then, with the \emph{energy capture}
$\Psi:=\Exp[\gamma_{k}z_{k}^{2}]/S$,
\begin{multline}
D_{\infty}=AD_{\infty}A^{\top}+4\Psi SKK^{\top}, \\
\tr(D_{\infty})=4S\Psi\,\tr\Lyap(A,KK^{\top}).
\label{eq:Dinf}
\end{multline}
\end{lemma}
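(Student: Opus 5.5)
The plan is to unroll the divergence recursion \eqref{eq:divergence} and compute $D_{k}$ term by term. With $D_{0}=0$ (so $d_{0}=0$) we have $d_{k}=\sum_{j=1}^{k}A^{k-j}\,2\gamma_{j}Kz_{j}$. Since $m=1$, $z_{j}$ is a scalar and $K$ a vector, so
\begin{equation*}
D_{k}=4\sum_{i,j=1}^{k}A^{k-i}KK^{\top}(A^{\top})^{k-j}\,\Exp[\gamma_{i}\gamma_{j}z_{i}z_{j}].
\end{equation*}
The statement is stated for general $m$ via Lemma~\ref{lem:magnitude}'s remark, but for the vector case I would carry $\Exp[\gamma_{i}\gamma_{j}z_{i}z_{j}^{\top}]$ instead; the argument is the same. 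The whole proof therefore reduces to two facts: the cross terms vanish, and the diagonal terms equal $\Psi S$.

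The cross terms, $i\neq j$, are handled with conditional sign symmetry. Write $z_{i}=\sgn(z_{i})|z_{i}|$. By \eqref{eq:admissible}, $\gamma_{i}$ and $\gamma_{j}$ are functions of magnitudes only. So
\begin{equation*}
\Exp[\gamma_{i}\gamma_{j}z_{i}z_{j}]=\Exp\big[\gamma_{i}\gamma_{j}|z_{i}||z_{j}|\,\Exp[\sgn(z_{i})\sgn(z_{j})\mid\{|z_{l}|\}]\big].
\end{equation*}
By Definition~\ref{def:css} the inner conditional expectation equals $\Exp[\sgn z_{i}]\Exp[\sgn z_{j}]=0$. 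This is the only step that really uses a hypothesis beyond bookkeeping, and it is the point to state carefully. Without CSS the innovations are uncorrelated but not independent, and the schedule's dependence on magnitudes could correlate $\gamma_{i}\gamma_{j}$ with $z_{i}z_{j}$. Integrability is fine because $\Exp z^{2}=S<\infty$ and $\gamma\in\{0,1\}$.

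The diagonal terms, $i=j$, give $\Exp[\gamma_{j}z_{j}^{2}]=\Psi S$, which by stationarity is the same for all $j$. Here stationarity of $\{z_{k}\}$ together with the time-invariant window rule $\phi$ makes $(\gamma_{j},z_{j})$ jointly stationary. For $j\le L$ the window would reach before time $1$; I would either take the adversary's observation prefix as providing those samples or note that the finitely many early terms are damped by $A^{k-j}$ and so do not affect the limit. This gives
\begin{equation*}
D_{k}=4\Psi S\sum_{t=0}^{k-1}A^{t}KK^{\top}(A^{\top})^{t}.
\end{equation*}
Equivalently, $D_{k}=AD_{k-1}A^{\top}+4\Psi SKK^{\top}$, which follows directly from the recursion because $\Exp[d_{k-1}\gamma_{k}z_{k}]=0$ by the same sign argument.

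Since $\rho(A)<1$, the series converges to $D_{\infty}=4\Psi S\,\Lyap(A,KK^{\top})$. By uniqueness of the Lyapunov solution, this $D_{\infty}$ satisfies the fixed-point equation in \eqref{eq:Dinf}. Taking traces and using linearity of $\Lyap$ in $W$ gives $\tr(D_{\infty})=4S\Psi\,\tr\Lyap(A,KK^{\top})$.

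I expect the main obstacle to be only the rigorous handling of the sign-cancellation step, including the initial window edge. The rest is the standard Lyapunov series.
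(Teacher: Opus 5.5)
Your proposal is correct and takes essentially the paper's route. You expand the divergence recursion \eqref{eq:divergence} and kill the cross terms (equivalently $\Exp[d_{k-1}\gamma_{k}z_{k}]$) by conditioning on the magnitude process and invoking conditional sign symmetry. You then keep the driving term $4\Psi SKK^{\top}$ and pass to the Lyapunov limit using $\rho(A)<1$.

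Unrolling into the double sum instead of working with the one-step recursion is only a bookkeeping difference. Your handling of the window edge for $k\le L$ is a reasonable treatment of a point the paper leaves implicit.
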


Expanding \eqref{eq:divergence} produces cross terms in
$\Exp[d_{k-1}\gamma_{k}z_{k}^{\top}]$ besides the driving term; these
vanish under Def.~\ref{def:css}, since $d_{k-1}$ is past-measurable and
$\gamma_{k}$ reads magnitudes only. Off Def.~\ref{def:css} they need not
vanish and \eqref{eq:Dinf} ceases to be an identity, but $\Psi$ remains
the only schedule-dependent quantity in the driving term and is the
design objective throughout.

Since $4S$ and $\tr\Lyap(A,KK^{\top})$ are plant and filter properties
that no schedule can alter, the adversary's entire influence is the
scalar $\Psi$ --- the share of innovation energy it corrupts. It equals
$\Gbar$ for a schedule independent of $\{z_{j}\}$ and tends to one as
the firing set concentrates on the largest innovations.

\begin{lemma}[Memoryless optimality {\cite[Cor.~1]{din2026distributionfreebudgetedstealthyattack}}]
\label{lem:memoryless}
Let $F_{|z|}$ be continuous. For every schedule satisfying
\eqref{eq:admissible} with rate $\Gbar$,
\begin{equation}
\Psi=\Gbar+\Cov(\gamma_{k},z_{k}^{2})/S,
\label{eq:psicov}
\end{equation}
and $\Psi$ is maximized by the \emph{memoryless} upper level set
\begin{equation}
\gamma^{\star}_{k}=\mathbf1\{|z_{k}|>q\},\qquad
q:=F^{-1}_{|z|}(1-\Gbar),
\label{eq:optrule}
\end{equation}
any maximizer agreeing with $\gamma^{\star}$ almost everywhere.
\end{lemma}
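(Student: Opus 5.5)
The proof splits into an identity for $\Psi$ and a rearrangement (Neyman--Pearson / bathtub) argument over the admissible class \eqref{eq:admissible}.

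\emph{Step 1: the covariance identity.} For the identity \eqref{eq:psicov}, expand
\[
\Exp[\gamma_{k}z_{k}^{2}]=\Exp[\gamma_{k}]\,\Exp[z_{k}^{2}]+\Cov(\gamma_{k},z_{k}^{2}),
\]
and then use $\Exp[\gamma_{k}]=\Gbar$ and $\Exp[z_{k}^{2}]=S$ from \eqref{eq:innov}. Dividing by $S$ gives \eqref{eq:psicov}. Stationarity makes this independent of $k$, and no sign symmetry is needed.

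\emph{Step 2: reduce to a memoryless candidate.} For maximization, $\gamma_{k}$ depends on $(|z_{k-L}|,\dots,|z_{k}|)$, but the objective $\Exp[\gamma_{k}z_{k}^{2}]$ involves only $|z_{k}|$. I would condition on $|z_{k}|$ and write
\[
\Exp[\gamma_{k}z_{k}^{2}]=\Exp\big[h(|z_{k}|)\,z_{k}^{2}\big],
\qquad h(u):=\Prob(\gamma_{k}=1\mid |z_{k}|=u)\in[0,1],
\]
with $\Exp[h(|z_{k}|)]=\Gbar$. The problem thus relaxes to maximizing $\int h(u)\,u^{2}\,dF_{|z|}(u)$ over measurable $h\colon[0,\infty)\to[0,1]$ with $\int h\,dF_{|z|}=\Gbar$. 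Every admissible schedule, with or without memory, induces such an $h$, so the relaxation's value upper-bounds $\Psi$ over the whole class.

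\emph{Step 3: bathtub principle.} Set $h^{\star}=\mathbf1\{u>q\}$. Continuity of $F_{|z|}$ guarantees that $q=F_{|z|}^{-1}(1-\Gbar)$ satisfies $\Prob(|z_{k}|>q)=\Gbar$ exactly, so no randomization at the boundary is needed. For any feasible $h$,
\[
\int (h^{\star}-h)(u^{2}-q^{2})\,dF_{|z|}\ge 0,
\]
because the integrand is pointwise nonnegative: $h^{\star}-h\ge0$ where $u>q$ and $\le0$ where $u\le q$. The $q^{2}$ term integrates to zero since both $h$ and $h^{\star}$ have mass $\Gbar$. Hence $\Exp[h^{\star}z^{2}]\ge\Exp[hz^{2}]$, and $h^{\star}$ is realized by the memoryless rule \eqref{eq:optrule}.

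\emph{Step 4: uniqueness.} This is the main obstacle. Equality forces $(h^{\star}-h)(u^{2}-q^{2})=0$ $F_{|z|}$-a.e., so $h=h^{\star}$ a.e. off $\{u=q\}$, and that set is null by continuity. Then $h\in\{0,1\}$ a.e., i.e.\ $\Prob(\gamma_{k}=1\mid|z_{k}|)\in\{0,1\}$. Since $\gamma_{k}$ is itself an indicator, this means $\gamma_{k}=\mathbf1\{|z_{k}|>q\}$ almost surely. Any memory in a maximizer is therefore inert, and it agrees with $\gamma^{\star}$ almost everywhere.

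The delicate point is that this last implication must be made carefully, through $\Exp[\gamma_{k}(1-h^{\star})]=0$ and $\Exp[(1-\gamma_{k})h^{\star}]=0$, rather than by treating $h$ as if it were $\gamma_{k}$ itself.
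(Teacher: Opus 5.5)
Your proof is correct and takes the route the paper indicates. The paper imports this lemma from its companion work and justifies it only by remarking that the objective is pointwise in $\gamma_{k}$ and is therefore maximized by sorting; your bathtub inequality with multiplier $q^{2}$ is that sorting argument, the same upper-level-set step as in the proof of Theorem~\ref{thm:structure} with $\mu=0$, with Steps~2 and~4 additionally making explicit that memory enters only through $h(u)=\Prob(\gamma_{k}=1\mid|z_{k}|=u)$ and that equality forces $\gamma_{k}=\gamma^{\star}_{k}$ almost surely.
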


Lemma~\ref{lem:memoryless} is the benchmark from which this paper
departs, and it shows why the departure is not obvious: absent any
constraint from $\Gsgn$, the optimum uses no memory, the objective being
pointwise in $\gamma$ and a pointwise objective being maximized by
sorting.

The threshold $q$ in \eqref{eq:optrule} depends on $F_{|z|}$, which the
adversary does not know. It is obtained instead from the nominal record,
without distributional knowledge. Let $s_{1},\dots,s_{N}$ be a nominal
calibration record of a magnitude-measurable score, and let
$\alpha\in[1/(N+1),1]$ be a \emph{conformal level}, that is, a nominal
exceedance probability. The split-conformal threshold at level $\alpha$
is the order statistic
\begin{equation}
\hat\varepsilon_{N}(\alpha):=s_{(j_{N}(\alpha))},\qquad
j_{N}(\alpha):=\lceil(N+1)(1-\alpha)\rceil .
\label{eq:conformal}
\end{equation}

\begin{lemma}[Distribution-free level {\cite[Thm.~2]{din2026distributionfreebudgetedstealthyattack}}]
\label{lem:budget}
If $s_{1},\dots,s_{N+1}$ are exchangeable then
\begin{equation}
\Prob\big(s_{N+1}>\hat\varepsilon_{N}(\alpha)\big)
=1-j_{N}(\alpha)/(N+1)\le\alpha .
\label{eq:conformalguarantee}
\end{equation}
If in addition $\{s_{k}\}$ is stationary and ergodic, the realized
exceedance rate converges almost surely to
$1-F_{s}(\hat\varepsilon_{N}(\alpha))$, and to $\alpha$ as
$N\to\infty$.
\end{lemma}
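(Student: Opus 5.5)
The plan is the classical split-conformal rank argument, followed by an ergodic-theorem step for the long-run rate.

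\emph{Finite-sample identity.} Condition on the multiset $\{s_{1},\dots,s_{N+1}\}$. By exchangeability, $s_{N+1}$ is equally likely to occupy each of the $N+1$ positions in the sorted list. Assume first that ties occur with probability zero, as when the law of $s$ is continuous; this matches the standing continuity of $F_{|z|}$ in Lemma~\ref{lem:memoryless}.

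Write $j=j_{N}(\alpha)$. Then $s_{N+1}>s_{(j)}$, with $s_{(j)}$ the $j$-th order statistic of the first $N$ scores, exactly when the rank of $s_{N+1}$ among all $N+1$ values exceeds $j$. That event has probability $(N+1-j)/(N+1)=1-j/(N+1)$.

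The inequality $\le\alpha$ is immediate from the ceiling, since $j\ge(N+1)(1-\alpha)$. The constraint $\alpha\ge1/(N+1)$ ensures $j\le N$, so the order statistic is well defined.

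If ties are possible, the equality weakens to $\le$. One can either break ties by an independent uniform perturbation, which preserves exchangeability, or argue directly that the rank event is contained in the stated one. I would state the identity under the no-ties proviso implicit in the lemma.

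\emph{Asymptotic rate.} Fix the calibration record and hence the threshold $\hat\varepsilon:=\hat\varepsilon_{N}(\alpha)$. The realized exceedance rate is $\frac1T\sum_{k}\mathbf1\{s_{k}>\hat\varepsilon\}$, taken over the post-calibration stream.

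For each fixed $c$, the indicator process $\mathbf1\{s_{k}>c\}$ is a measurable function of a stationary ergodic sequence, so it is itself stationary and ergodic. Birkhoff's theorem then gives almost-sure convergence to $1-F_{s}(c)$.

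The main obstacle is that $\hat\varepsilon$ is random and depends on the same process. I would handle this by applying Birkhoff simultaneously over a countable dense set of $c$ together with the jump points of $F_{s}$, which yields a single null set. Monotonicity in $c$ and continuity of $F_{s}$ then give uniform convergence, in Glivenko--Cantelli fashion, of the empirical tail along the stream. Evaluating at $c=\hat\varepsilon$ therefore yields the limit $1-F_{s}(\hat\varepsilon)$ almost surely.

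Finally, the empirical quantile $\hat\varepsilon_{N}(\alpha)$ is an order statistic of the first $N$ samples. The ergodic Glivenko--Cantelli theorem applied to those samples gives $\hat F_{N}\to F_{s}$ uniformly a.s. Because $j_{N}/N\to1-\alpha$, it follows that $F_{s}(\hat\varepsilon_{N}(\alpha))\to1-\alpha$ by continuity of $F_{s}$. Hence the rate tends to $\alpha$ as $N\to\infty$.
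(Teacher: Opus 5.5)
Your proposal is correct and follows the same route as the paper, which justifies the lemma by the uniformity of the rank of $s_{N+1}$ among $N+1$ exchangeable scores, and then by Birkhoff's theorem and the ergodic Glivenko--Cantelli theorem. The provisos you add are genuinely needed and are left implicit in the paper's statement: no ties for the equality (with ties only $\le$ holds), a single null set to handle the data-dependent threshold, and continuity of $F_{s}$ for the limit $\alpha$.
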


The bound is the uniformity of the rank of $s_{N+1}$ among $N+1$
exchangeable scores; the limits follow from Birkhoff's and the ergodic
Glivenko--Cantelli theorems. Two properties matter here: \eqref{eq:conformal} needs exchangeability rather than independence, which is essential because off Gaussianity the innovation is uncorrelated but dependent, and the index $j_{N}$ is taken over $N+1$, which removes the over-firing bias of the empirical quantile at short records.

Applied to $s_{k}=|z_{k}|$ at level $\alpha=\Gbar$, the rule
$\gamma_{k}=\mathbf1\{|z_{k}|>\hat\varepsilon_{N}(\Gbar)\}$ recovers
\eqref{eq:optrule} with $\hat\varepsilon_{N}$ in place of $q$: for the
memoryless rule a single order statistic delivers both the budget
guarantee and the damage optimum. Once the schedule has memory, the
level $\alpha$ and the firing rate $\Gbar$ separate
(Sec.~\ref{sec:calibration}).

\subsection{The Stealth Boundary}
\label{sec:signs}
The innovation of a correctly tuned Kalman filter is white,
$\Exp[z_{k}z_{k+\ell}^{\top}]=0$ for $\ell\ge1$, whatever the noise law.
Under Gaussian noise whiteness upgrades to independence, so
Def.~\ref{def:css} holds; off Gaussianity the innovation is uncorrelated
but \emph{not} independent and the link between signs and magnitudes
survives. Lemma~\ref{lem:boundary} turns that link into the exact
boundary of the certificate of Lemma~\ref{lem:magnitude}, and names the
obstruction.

\begin{lemma}[Stealth boundary {\cite[Prop.~2]{din2026distributionfreebudgetedstealthyattack}}]
\label{lem:boundary}
Let $\gamma$ satisfy \eqref{eq:admissible}. If $\{z_{k}\}$ is CSS then
$\{\zc_{k}\}\overset{d}{=}\{z_{k}\}$ as processes, and every detector in
$\Gmag$ \emph{or} $\Gsgn$ retains its nominal false-alarm rate.
Conversely, CSS fails whenever
\begin{equation}
\Exp[z_{k}^{3}z_{k+1}]
=C\Afil\cum(z_{k},z_{k},z_{k},e_{k})-CAK\,\kappa_{4}(v_{k})\neq0,
\label{eq:fourth}
\end{equation}
and both terms of \eqref{eq:fourth} are fourth cumulants, vanishing for
Gaussian $w,v$. Exposure is thus a purely non-Gaussian phenomenon.
\end{lemma}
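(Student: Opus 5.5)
The plan has two independent halves: the forward direction (CSS implies equality of laws) and the converse (the fourth-moment computation \eqref{eq:fourth} together with the implication that a nonzero value breaks CSS).

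\emph{Forward direction.} Write $z_{k}=\epsilon_{k}|z_{k}|$ with $\epsilon_{k}:=\sgn(z_{k})$. By \eqref{eq:action}, $\zc_{k}=\vsg_{k}\epsilon_{k}|z_{k}|$. By \eqref{eq:admissible}, each $\vsg_{k}=1-2\gamma_{k}$ is a deterministic function of the magnitude process $M:=\{|z_{j}|\}_{j}$. Condition on $M$. Under Def.~\ref{def:css} the $\epsilon_{k}$ are then i.i.d.\ uniform on $\{\pm1\}$, and the $\vsg_{k}$ are fixed signs. Multiplying a uniform sign by a fixed sign leaves it uniform, and coordinatewise multiplication preserves independence. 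Hence, conditionally on $M$, $\{\vsg_{k}\epsilon_{k}\}\overset{d}{=}\{\epsilon_{k}\}$. Since the magnitudes are unchanged pathwise (Lemma~\ref{lem:magnitude}), integrating over the law of $M$ gives equality of all finite-dimensional distributions of $(\vsg_{k}\epsilon_{k},|z_{k}|)_{k}$ and $(\epsilon_{k},|z_{k}|)_{k}$, and therefore $\{\zc_{k}\}\overset{d}{=}\{z_{k}\}$. Every detector $g$ is a measurable functional of the received sequence, so $\Prob(g>\eta)$ is unchanged for every $\eta$. This covers $\Gmag$ and $\Gsgn$ at once.

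\emph{Converse.} First show that CSS forces $\Exp[z_{k}^{3}z_{k+1}]=0$. Write the product as $\epsilon_{k}\epsilon_{k+1}|z_{k}|^{3}|z_{k+1}|$. Conditioning on $M$, the expectation factors into $\Exp[\epsilon_{k}\epsilon_{k+1}\mid M]=0$ times a magnitude term, by independence and zero mean of the signs. Integrability requires finite fourth moments, which I will assume, as the statement implicitly does. By contraposition, a nonzero value violates CSS.

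\emph{Computing $\Exp[z_{k}^{3}z_{k+1}]$.} Use $z_{k+1}=Ce_{k+1}+v_{k+1}$ and $e_{k+1}=\Afil e_{k}+w_{k}-AKv_{k}$. The terms in $w_{k}$ and $v_{k+1}$ drop out: both are zero mean and independent of $z_{k}=Ce_{k}+v_{k}$, since $e_{k}$ depends only on noises up to $k-1$. This leaves
\[
\Exp[z_{k}^{3}z_{k+1}]=C\Afil\Exp[z_{k}^{3}e_{k}]-CAK\Exp[z_{k}^{3}v_{k}].
\]
Now convert moments to cumulants. For zero-mean variables, $\Exp[z^{3}e]=\cum(z,z,z,e)+3\Exp[z^{2}]\Exp[ze]$. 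For the second term, $\Exp[z_{k}^{3}v_{k}]=\kappa_{4}(v_{k})+3\Exp[z_{k}^{2}]\Exp[z_{k}v_{k}]$: expanding $z_{k}=Ce_{k}+v_{k}$ with $e_{k}\perp v_{k}$, the only cumulant surviving multilinearity and independence is $\kappa_{4}(v_{k})$. The covariance remainder is
\[
3S\big(C\Afil\Exp[z_{k}e_{k}]-CAK\Exp[z_{k}v_{k}]\big)=3S\,\Exp[z_{k}z_{k+1}]=0
\]
by whiteness of the Kalman innovation. Equivalently, $\Exp[z_{k}e_{k}]=\bar PC^{\top}$ and $\Exp[z_{k}v_{k}]=R$, and $C\Afil\bar PC^{\top}-CAKR=0$ follows from $K=\bar PC^{\top}S^{-1}$. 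This produces exactly the displayed identity \eqref{eq:fourth}.

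\emph{Gaussian case.} If $w$ and $v$ are Gaussian, $(z_{k},e_{k},v_{k})$ is jointly Gaussian, being a linear function of Gaussian noises. All cumulants of order four therefore vanish, so both terms are zero.

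The main obstacle I expect is the bookkeeping in the cumulant conversion: verifying that the second-order remainder collapses precisely to a multiple of $\Exp[z_{k}z_{k+1}]=0$. This requires care with the stationarity of $e_{k}$, so that $\Cov(e_{k})=\bar P$ in steady state. The forward direction is routine once the conditioning on $M$ is set up.
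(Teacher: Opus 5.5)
Your proof is correct and follows the route the paper relies on. The paper imports this lemma from its companion work rather than proving it here, but it uses the same conditioning on the magnitude process, with the conditional sign factor vanishing under CSS, in its proofs of Prop.~\ref{prop:exposure} and Thm.~\ref{thm:structure}. Your expansion of $z_{k+1}$ through the error recursion, with the Gaussian pairings collapsing to $3S\,\Exp[z_{k}z_{k+1}]=0$ by whiteness, is the computation that produces \eqref{eq:fourth}.

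One link is missing for the lemma's closing sentence, ``Exposure is thus a purely non-Gaussian phenomenon.'' That both fourth cumulants vanish under Gaussian noise shows only that this one witness is silent. Its vanishing is necessary for CSS, not sufficient, so it does not establish CSS and hence does not establish zero exposure. You need the step the paper states in Sec.~\ref{sec:signs}: under Gaussian $w,v$ the stationary innovations are jointly Gaussian and white, hence i.i.d.\ $\mathcal N(0,S)$. Then $\sgn(z_{k})$ is uniform, independent of $|z_{k}|$, and independent across $k$, so CSS holds. Your forward direction, or Prop.~\ref{prop:exposure}, then gives zero exposure. Add that line and the argument is complete.
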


Under CSS the sign flip is invisible to \emph{every} residual-based
detector and there is nothing to design around. Off Gaussianity
\eqref{eq:fourth} is generically non-zero, the dependence it measures
lives in the innovation's tail, and \eqref{eq:optrule} fires precisely
there: damage and detectability are two readings of one tail quantity.

\subsection{The Exposure of a Schedule}
\label{sec:exposure}
The detectors in $\Gsgn$ are correlation tests, so to constrain the
exposure against these the natural object is the autocovariance of the
transmitted sequence.

\begin{definition}
\label{def:exposure}
For $\ell\ge1$ the \emph{lag-$\ell$ exposure} of a schedule is
$\rlag_{\ell}(\gamma):=\Exp[\zc_{k}\zc_{k+\ell}]/S$.
\end{definition}

Three properties recommend $\rlag_{\ell}$. It is the population quantity
the lag-$\ell$ correlation detectors estimate, and that Ljung--Box
\cite{ljung1978measure} aggregates across lags. It is \emph{estimable
without a model}: since $\Exp[z_{k}z_{k+\ell}]=0$ nominally and
$|\zc|=|z|$, the adversary evaluates $\hat\rlag_{\ell}$ on the very
stream it transmits and the distribution-free discipline of Lemma~\ref{lem:budget} survives. And it degenerates correctly
(Prop.~\ref{prop:exposure}).

Expanding \eqref{eq:action} with
$\vsg_{k}\vsg_{k+\ell}=1-2\gamma_{k}-2\gamma_{k+\ell}
+4\gamma_{k}\gamma_{k+\ell}$ and using whiteness of the nominal
innovation,
\begin{multline}
\rlag_{\ell}(\gamma)\,S=
-2\Exp[\gamma_{k}z_{k}z_{k+\ell}]-2\Exp[\gamma_{k+\ell}z_{k}z_{k+\ell}]\\
+4\Exp[\gamma_{k}\gamma_{k+\ell}z_{k}z_{k+\ell}].
\label{eq:exposure}
\end{multline}
Writing $A_{k}:=\{\gamma_{k}=1\}$, the case $\ell=1$ collapses to
\begin{equation}
\rlag_{1}(\gamma)\,S=-2\,(T_{10}+T_{01}),
\label{eq:induced}
\end{equation}
with $T_{10}:=\Exp[\mathbf1_{A_{k}}\mathbf1_{A^{c}_{k+1}}z_{k}z_{k+1}]$
and $T_{01}:=\Exp[\mathbf1_{A^{c}_{k}}\mathbf1_{A_{k+1}}z_{k}z_{k+1}]$,
the concordant terms cancelling: only neighboring pairs in which one
instant fires and the other does not contribute to the lag-one exposure.

\begin{proposition}[Degeneracy under sign symmetry]
\label{prop:exposure}
If $\{z_{k}\}$ is CSS then $\rlag_{\ell}(\gamma)=0$ for every $\ell\ge1$
and every $\gamma$ satisfying \eqref{eq:admissible}. Conversely, each
term of \eqref{eq:exposure} is a magnitude-truncated cross moment,
non-zero only through the conditional sign correlation
$\Exp[\sgn(z_{k})\sgn(z_{k+\ell})\mid\{|z_{j}|\}]$, whose leading scalar
witness at $\ell=1$ is \eqref{eq:fourth}.
\end{proposition}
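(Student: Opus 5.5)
The forward direction comes straight from the definitions. I will write $z_{k}=\epsilon_{k}m_{k}$ with $\epsilon_{k}:=\sgn(z_{k})$ and $m_{k}:=|z_{k}|$, and let $\mathcal{M}:=\sigma(m_{j}:j\in\mathbb{Z})$. By \eqref{eq:admissible}, every $\gamma_{k}$ is $\mathcal{M}$-measurable, and hence so is $\vsg_{k}$. For $\ell\ge1$, Definition~\ref{def:exposure} gives
\begin{equation*}
\rlag_{\ell}(\gamma)S=\Exp\big[\vsg_{k}\vsg_{k+\ell}m_{k}m_{k+\ell}\,\Exp[\epsilon_{k}\epsilon_{k+\ell}\mid\mathcal{M}]\big].
\end{equation*}
Under CSS the signs are, conditionally on $\mathcal{M}$, i.i.d.\ uniform on $\{\pm1\}$. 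So for $\ell\neq0$ the inner conditional expectation equals $\Exp[\epsilon_{k}\mid\mathcal{M}]\Exp[\epsilon_{k+\ell}\mid\mathcal{M}]=0$, and $\rlag_{\ell}=0$ for every admissible $\gamma$. The only technical point is integrability of $m_{k}m_{k+\ell}$. This follows from Cauchy--Schwarz and $\Exp[z_{k}^{2}]=S<\infty$, which justifies conditioning inside the expectation.

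For the converse I treat each of the three terms of \eqref{eq:exposure} separately. Each has the form $\Exp[h(m)\,z_{k}z_{k+\ell}]$, where $h\in\{\gamma_{k},\gamma_{k+\ell},\gamma_{k}\gamma_{k+\ell}\}$ is an $\mathcal{M}$-measurable indicator. It is therefore a cross moment truncated to a region of magnitude space, which justifies the name ``magnitude-truncated''. Conditioning on $\mathcal{M}$ as above rewrites each term as
\begin{equation*}
\Exp\big[h\,m_{k}m_{k+\ell}\,c_{\ell}(\mathcal{M})\big],\qquad c_{\ell}(\mathcal{M}):=\Exp[\epsilon_{k}\epsilon_{k+\ell}\mid\{|z_{j}|\}].
\end{equation*}
A term is non-zero only if $c_{\ell}$ fails to vanish on the set $\{h=1\}$ (weighted by $m_{k}m_{k+\ell}$), which gives the stated ``only through''. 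Summing the three terms, $\rlag_{\ell}S=\Exp[(\vsg_{k}\vsg_{k+\ell}-1)m_{k}m_{k+\ell}c_{\ell}]$. Here I use that $\Exp[m_{k}m_{k+\ell}c_{\ell}]=\Exp[z_{k}z_{k+\ell}]=0$ by whiteness, and this is consistent with \eqref{eq:exposure}.

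For the scalar witness at $\ell=1$, the relevant identity is
\begin{equation*}
\Exp[z_{k}^{3}z_{k+1}]=\Exp[m_{k}^{3}m_{k+1}c_{1}(\mathcal{M})],
\end{equation*}
which comes from the same conditioning. If $c_{1}\equiv0$ almost surely, this moment vanishes; so if \eqref{eq:fourth} is non-zero, $c_{1}$ is not identically zero. The expansion of $\Exp[z_{k}^{3}z_{k+1}]$ into fourth cumulants is taken from Lemma~\ref{lem:boundary} and need not be redone. A brief remark closes the argument: $\Exp[z_{k}^{3}z_{k+1}]$ is the lowest-order polynomial moment that weights $c_{1}$ by an odd-free positive magnitude weight. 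The lower-order candidates do not work, since $\Exp[z_{k}z_{k+1}]=0$ by whiteness and moments of total degree three with odd powers vanish under symmetry. This is why it is the leading witness.

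The main obstacle is the interpretive word ``leading'' rather than any calculation. I will make it precise as the lowest total polynomial degree at which a moment of the form $\Exp[m_{k}^{a}m_{k+1}^{b}c_{1}]$ can be non-zero with $\Exp[m_{k}m_{k+1}c_{1}]=0$ forced. I will not claim that non-vanishing of \eqref{eq:fourth} implies $\rlag_{1}(\gamma)\neq0$ for a particular schedule, since the proposition asserts only the ``only through'' direction.
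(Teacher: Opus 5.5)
Your argument is correct and is essentially the paper's. Conditioning on the magnitude process puts the factor $\Exp[\sgn(z_k)\sgn(z_{k+\ell})\mid\{|z_j|\}]$ into every term, and CSS kills it because $\gamma$ is magnitude-measurable. For the converse the paper simply cites Lemma~\ref{lem:boundary}. Your identity $\Exp[z_k^3z_{k+1}]=\Exp[m_k^3m_{k+1}c_1]$ improves slightly on that citation: the lemma as stated only gives failure of CSS, whereas the proposition asks for a witness of the sign correlation $c_1$ specifically. Note that this identity needs $\Exp[z_k^4]<\infty$, not just second moments.

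Your justification of ``leading'' needs repair. Degree-three moments are not ruled out by symmetry. The noises are not assumed symmetric; for example, $\Exp[z_k^2z_{k+1}]=C\Afil\cum(z_k,z_k,e_k)-CAK\,\kappa_3(v_k)$ need not vanish. They are irrelevant for a different reason: an even power removes one sign, so they see a conditional sign mean such as $\Exp[\sgn(z_{k+1})\mid\mathcal M]$ rather than $c_1$.

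Your proposed formalization over all $\Exp[m_k^am_{k+1}^bc_1]$ is also too broad. For instance, $\Exp[|z_k|z_kz_{k+1}]=\Exp[m_k^2m_{k+1}c_1]$ has degree three and need not vanish. Restrict to polynomial moments instead. Then $\Exp[z_k^az_{k+1}^b]$ weights $c_1$ exactly when $a$ and $b$ are both odd, and whiteness removes $a=b=1$. Degree four is therefore the first informative order, and it is shared by \eqref{eq:fourth} and $\Exp[z_kz_{k+1}^3]$.
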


\begin{proof}
Conditioning on the magnitude process, each expectation in
\eqref{eq:exposure} carries the factor
$\Exp[\sgn(z_{k})\sgn(z_{k+\ell})\mid\{|z_{j}|\}]$, which is zero under
Def.~\ref{def:css} since the conditional signs are i.i.d.\ uniform and
$\gamma$ depends on magnitudes alone; the converse is
Lemma~\ref{lem:boundary}.
\end{proof}

Proposition~\ref{prop:exposure} delimits the scope of this paper: under
Gaussian noise the constraint introduced next is inactive for
\emph{every} admissible schedule and $\gamma^{\star}$ remains optimal.
The constrained problem is a strictly non-Gaussian object.

\subsection{Problem Statement}
\label{sec:statement}
The adversary's two design choices are of different kinds. The
\emph{signal} is determined: the sign flip maximizes damage within its
family \cite[Thm.~3]{Guo2023} and, by Lemma~\ref{lem:magnitude}, is
exactly stealthy against $\Gmag$ whatever the schedule. Every remaining
degree of freedom lies in the \emph{schedule} $\{\gamma_{k}\}$, which
must serve two distinct requirements: setting the damage through $\Psi$
(Lemma~\ref{lem:damage}) and the exposure through $\rlag_{\ell}$
(Def.~\ref{def:exposure}). Fixing a tolerance
$\delta=(\delta_{1},\dots,\delta_{L})$, $\delta_{\ell}\ge0$, on the
exposure the adversary is willing to present, we solve
\begin{equation}
\begin{aligned}
\max_{\gamma}\ \ & \Psi(\gamma)=\Exp[\gamma_{k}z_{k}^{2}]/S\\
\text{s.t.}\ \ & \Exp[\gamma_{k}]\le\Gbar,\quad
  \gamma_{k}=\phi(|z_{k-L}|,\dots,|z_{k}|)\in\{0,1\},\\
& |\rlag_{\ell}(\gamma)|\le\delta_{\ell},\quad \ell=1,\dots,L,
\end{aligned}
\label{eq:cproblem}
\end{equation}
the three constraints being the budget, admissibility and the exposure
tolerance respectively. By \eqref{eq:Dinf}, maximizing $\Psi$ maximizes
$\tr(D_{\infty})$, so \eqref{eq:cproblem} asks for the most damaging
schedule whose transmitted stream is, to within $\delta$,
indistinguishable from nominal to a correlation monitor. All three
constraints are checkable by the adversary from magnitudes alone.

Three features of \eqref{eq:cproblem} determine the analysis that
follows. First, the objective is \emph{pointwise} in $\gamma$ while the
exposure constraint is \emph{bilinear across two instants}, so the
problem does not separate instant by instant and cannot be solved by
sorting, and Lemma~\ref{lem:memoryless} cannot be expected to survive.
Second, the window $L$ is imposed by the constraint rather than chosen
freely, and Sec.~\ref{sec:price} prices it. Third, $\delta$ indexes a
frontier: at $\delta=\infty$ the problem returns $\gamma^{\star}$, while
at $\delta=0$ any $\gamma$ independent of $\{z_{j}\}$ remains feasible
with $\Psi=\Gbar$ by \eqref{eq:psicov}, so the blind point $(\Gbar,0)$
anchors the frontier from below at every $\delta$. 
How much damage
survives a small $\delta$ off Gaussianity is the question
Sec.~\ref{sec:structure} answers.

\section{Structure of the Constrained Optimum}
\label{sec:structure}

\subsection{Damage and Exposure Are Distinct Functionals}
\label{sec:decouple}

\begin{proposition}[Decomposition]
\label{prop:decouple}
Let $\gamma$ satisfy \eqref{eq:admissible} with rate $\Gbar$. Then
\begin{enumerate}
\item[(i)] by \eqref{eq:psicov}, $\Psi$ depends on the firing set only
  through its \emph{location} in magnitude space, and is maximized by
  placing it on the upper $\Gbar$-tail;
\item[(ii)] by \eqref{eq:induced}, $\rlag_{1}$ depends on the firing set
  only through pairs $(k,k+1)$ that \emph{straddle} its boundary:
  \begin{equation}
  \rlag_{1}(\gamma)=-\tfrac{2}{S}\,
  \Prob(\gamma_{k}\neq\gamma_{k+1})\;
  \Exp[z_{k}z_{k+1}\mid\gamma_{k}\neq\gamma_{k+1}],
  \label{eq:factorized}
  \end{equation}
  a product of a \emph{boundary density} and a conditional tail cross
  moment;
\item[(iii)] writing $\rho_{1}$ for the lag-one autocorrelation of the
  firing indicator $\{\gamma_{k}\}$, the boundary density is
  \begin{equation}
  \Prob(\gamma_{k}\neq\gamma_{k+1})=2\Gbar(1-\Gbar)(1-\rho_{1});
  \label{eq:boundarydensity}
  \end{equation}
  for a memoryless rule $\gamma_{k}=\mathbf1\{|z_{k}|>t\}$ with
  threshold $t$, the single parameter $t$ fixes \emph{both} the location
  and, through $\rho_{1}$, the boundary density.
\end{enumerate}
\end{proposition}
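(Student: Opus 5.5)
The three parts rest on three different earlier identities, so I will prove them separately, in the order stated, and keep each as an algebraic consequence of \eqref{eq:psicov}, \eqref{eq:induced} and the stationarity of $\{\gamma_{k}\}$.

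\emph{Part (i).} I start from \eqref{eq:psicov}, $\Psi=\Gbar+\Cov(\gamma_{k},z_{k}^{2})/S$. Since $\Gbar$ is fixed, $\Psi$ depends on the schedule only through $\Exp[\gamma_{k}z_{k}^{2}]=\Exp\big[z_{k}^{2}\,\Prob(\gamma_{k}=1\mid|z_{k}|)\big]$. This is a functional of the conditional firing probability $p(u):=\Prob(\gamma_{k}=1\mid|z_{k}|=u)$, that is, of where in magnitude space the firing mass sits. Memory and run structure enter only through $p$. Maximizing $\int u^{2}p(u)\,dF_{|z|}(u)$ subject to $0\le p\le1$ and $\int p\,dF_{|z|}=\Gbar$ is a bathtub/Neyman--Pearson problem. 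Its solution $p=\mathbf1\{u>q\}$ is exactly Lemma~\ref{lem:memoryless}, so (i) follows by citation.

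\emph{Part (ii).} Equation \eqref{eq:induced} already removes the concordant terms and gives $\rlag_{1}S=-2(T_{10}+T_{01})$. Because the events $\{\gamma_{k}=1,\gamma_{k+1}=0\}$ and $\{\gamma_{k}=0,\gamma_{k+1}=1\}$ are disjoint and their union is $B:=\{\gamma_{k}\neq\gamma_{k+1}\}$, we have $T_{10}+T_{01}=\Exp[\mathbf1_{B}z_{k}z_{k+1}]$. When $\Prob(B)>0$, this equals $\Prob(B)\,\Exp[z_{k}z_{k+1}\mid B]$, which yields \eqref{eq:factorized}. When $\Prob(B)=0$, both sides vanish by convention. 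The phrase ``conditional tail cross moment'' is then justified by Prop.~\ref{prop:exposure}: the moment is nonzero only through the sign correlation given magnitudes.

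\emph{Part (iii).} I use stationarity of $\{\gamma_{k}\}$, which is inherited from stationarity of $\{|z_{k}|\}$ through the sliding window $\phi$. For binary variables, $\Prob(\gamma_{k}\neq\gamma_{k+1})=\Exp[(\gamma_{k}-\gamma_{k+1})^{2}]=2\Exp[\gamma_{k}]-2\Exp[\gamma_{k}\gamma_{k+1}]$, using $\gamma^{2}=\gamma$. Writing $\Exp[\gamma_{k}\gamma_{k+1}]=\Gbar^{2}+\rho_{1}\Gbar(1-\Gbar)$, since $\operatorname{Var}\gamma_{k}=\Gbar(1-\Gbar)$, gives $2\Gbar-2\Gbar^{2}-2\rho_{1}\Gbar(1-\Gbar)=2\Gbar(1-\Gbar)(1-\rho_{1})$, which is \eqref{eq:boundarydensity}.

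For the memoryless rule, the rate constraint $\Prob(|z_{k}|>t)=\Gbar$ pins $t=F^{-1}_{|z|}(1-\Gbar)$ by continuity, and this fixes the location. Then $\rho_{1}$ is determined by $\Prob(|z_{k}|>t,|z_{k+1}|>t)$, a fixed functional of the joint magnitude law at lag one, so no freedom remains to set the boundary density independently.

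I expect the main obstacle to be stating this last claim precisely rather than computing anything. For a memoryless rule the only free parameter is $t$, so at the budget $\Gbar$ both the location and $\rho_{1}$ are determined. I will phrase the statement as ``determined by $t$'' and not attempt to claim any monotonicity of $\rho_{1}$ in $t$.
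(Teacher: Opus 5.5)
Your proposal is correct and follows the paper's proof essentially step for step. Part (i) is read off \eqref{eq:psicov}; your reduction to the conditional firing probability $p(u)$ and the bathtub argument only make explicit what the paper leaves to Lemma~\ref{lem:memoryless}. Part (ii) merges $T_{10}+T_{01}$ into $\Exp[\mathbf1\{\gamma_{k}\neq\gamma_{k+1}\}z_{k}z_{k+1}]$ and conditions, and part (iii) uses $\Prob(\gamma_{k}\neq\gamma_{k+1})=2[\Gbar-\Exp(\gamma_{k}\gamma_{k+1})]$ with $\Exp[\gamma_{k}\gamma_{k+1}]=\Gbar^{2}+\rho_{1}\Gbar(1-\Gbar)$, closing exactly as the paper does: $t$ alone fixes the joint law of $(\gamma_{k},\gamma_{k+1})$ and hence $\rho_{1}$.
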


\begin{proof}
(i) is \eqref{eq:psicov}. (ii) Both surviving terms of
\eqref{eq:induced} carry the indicator
$\mathbf1\{\gamma_{k}\neq\gamma_{k+1}\}$, so
$T_{10}+T_{01}=\Exp[\mathbf1\{\gamma_{k}\neq\gamma_{k+1}\}z_{k}z_{k+1}]$,
and \eqref{eq:factorized} is the tower property. (iii) For a stationary
binary sequence,
$\Prob(\gamma_{k}=\gamma_{k+1}=1)=\Gbar^{2}+\rho_{1}\Gbar(1-\Gbar)$, and
$\Prob(\gamma_{k}\neq\gamma_{k+1})=2[\Gbar-\Prob(\gamma_{k}=\gamma_{k+1}=1)]$
gives \eqref{eq:boundarydensity}; a memoryless rule fixes the level set
$\{|z|>t\}$ and, $t$ being its only parameter, fixes the joint law of
$(\gamma_{k},\gamma_{k+1})$ with it.
\end{proof}

Proposition~\ref{prop:decouple} is a statement about degrees of freedom.
Damage is a property of \emph{where} the firing set sits, exposure of
\emph{how it is arranged in time}; these are independent attributes of a
subset of the time axis, yet a memoryless rule is indexed by the single
scalar $t$ that fixes both. Raising $t$ concentrates the firing set on
the tail, which \eqref{eq:Dinf} rewards, and makes every firing an
isolated spike flanked by two boundaries, which \eqref{eq:induced}
penalizes; off Gaussianity the two effects reinforce. Breaking the
coupling needs a second degree of freedom that holds the location fixed
while reorganizing runs, and such a rule must consult past decisions.

Equation \eqref{eq:boundarydensity} also quantifies what memory can buy.
A rule that fires in runs of length $L$ has
$\rho_{1}=\big[(L-1)/L-\Gbar\big]/(1-\Gbar)$, which approaches $1-1/L$
at small budgets; its boundary density is then $2\Gbar(1-\Gbar)/L$, so
run structure suppresses exposure as $1/L$. The same $\rho_{1}$
reappears in Prop.~\ref{prop:neff} as the contraction of the effective
calibration sample, so the benefit and the price of memory are governed
by one quantity.

\subsection{Randomization and Windowed Scores Do Not Decouple Them}
\label{sec:dither}

Proposition~\ref{prop:decouple} predicts, before any experiment, the
behavior of the two families a designer would reach for first.

\emph{Randomized scores.} With $\xi_{k}$ i.i.d.\ zero-mean and
unit-variance, independent of $\{z_{j}\}$, and a randomization scale
$\sigma_{d}\ge0$, set $s_{k}=|z_{k}|+\sigma_{d}\xi_{k}$. This score
remains magnitude measurable, leaves
Lemmas~\ref{lem:magnitude}--\ref{lem:budget} intact, and interpolates
between $\gamma^{\star}$ at $\sigma_{d}=0$ and blind firing as
$\sigma_{d}\to\infty$.

\begin{remark}[Randomization is not a stealth parameter]
\label{rem:dither}
Along this family $\Psi$ is non-increasing in $\sigma_{d}$ with
$\Psi\to\Gbar$, whereas $\rlag_{1}$ is in general \emph{not} monotone
and changes sign at a value of $\sigma_{d}$ that depends on the
innovation law. The mechanism is Prop.~\ref{prop:decouple}: randomizing
the score perturbs the \emph{location} of the firing set, which is the
term carrying $\Psi$, while acting on exposure only through the
conditional cross moment of Prop.~\ref{prop:decouple}(ii) --- a
difference of truncated moments whose weight shifts, and whose sign
inverts, as the firing set slides off the tail. The $\sigma_{d}$ at
which exposure is small therefore depends on the innovation law and on
the budget, so randomization buys correlation stealth only at the cost
of the distributional knowledge this framework is built to avoid.
\end{remark}

\emph{Off-the-shelf windowed scores.} A score concentrated on
near-extreme instants --- a one-step magnitude predictor, or a
magnitude-difference score --- preserves the location of the firing set,
hence $\Psi$, but preserves its isolated-spike run structure as well,
hence the exposure. A windowed aggregate --- a moving energy, or a fixed
hold after a crossing --- suppresses boundary density and with it the
exposure, but does so only by dispersing the firing set off the tail, so
that $\Psi$ collapses toward $\Gbar$. Neither family decouples the two
functionals; they occupy opposite ends of a single trade-off, and
Sec.~\ref{sec:results} measures both.

\subsection{The Corrected Score}
\label{sec:corrected}

We dualize the exposure constraint and evaluate its effect on a single
decision. Introduce a multiplier
$\lambda\in\mathbb R$ for the budget constraint and
$\mu=(\mu_{1},\dots,\mu_{L})$ for the exposure constraints, where
$\mu_{\ell}:=\mu^{+}_{\ell}-\mu^{-}_{\ell}\in\mathbb R$ collects the
non-negative multipliers of the two sides $\rlag_{\ell}\le\delta_{\ell}$
and $-\rlag_{\ell}\le\delta_{\ell}$, of which at most one is active. The
Lagrangian is
\begin{equation}
\mathcal J(\gamma;\lambda,\mu)=\Exp[\gamma_{k}z_{k}^{2}]
-\lambda\Exp[\gamma_{k}]
-\textstyle\sum_{\ell=1}^{L}\mu_{\ell}\,S\,\rlag_{\ell}(\gamma).
\label{eq:lagrangian}
\end{equation}

Optimizing \eqref{eq:lagrangian} instant by instant requires only the
effect of a single decision on the exposure. Holding all other decisions
fixed and switching $\gamma_{k}$ from $0$ to $1$ changes $\zc_{k}$ from
$z_{k}$ to $-z_{k}$ and leaves every other transmitted sample
unaltered, so the only affected terms of
$S\rlag_{\ell}=\Exp[\zc_{k}\zc_{k+\ell}]$ are the two products in which
$\zc_{k}$ appears, namely those indexed by the pairs $(k-\ell,k)$ and
$(k,k+\ell)$. Hence
\begin{equation}
S\,\frac{\partial\rlag_{\ell}}{\partial\gamma_{k}}
=-2\,z_{k}\big(\zc_{k-\ell}+\zc_{k+\ell}\big).
\label{eq:flipdiff}
\end{equation}

\begin{theorem}[Structure of the constrained optimum]
\label{thm:structure}
Let $\Mset_{k}:=\sigma(|z_{j}|:j\le k)$ be the magnitude filtration and
let $\gamma$ be admissible for \eqref{eq:cproblem}. At any solution
there exist multipliers $(\lambda,\mu)$ such that, almost everywhere,
\begin{equation}
\gamma_{k}=\mathbf1\{\,z_{k}^{2}>\lambda+c_{k}\,\},\qquad
c_{k}=\sum_{\ell=1}^{L}\mu_{\ell}\,
\Exp\!\left[S\,\frac{\partial\rlag_{\ell}}{\partial\gamma_{k}}
\,\Big|\,\Mset_{k}\right],
\label{eq:corrected}
\end{equation}
that is, the optimum is a threshold on the \emph{corrected score}
$z_{k}^{2}-c_{k}$. Moreover $c_{k}$ is $\Mset_{k}$-measurable, so
\eqref{eq:corrected} satisfies \eqref{eq:admissible} and
Lemma~\ref{lem:magnitude} applies to it; and $c_{k}\equiv0$ whenever
$\mu=0$ or $\{z_{k}\}$ is conditionally sign symmetric, in either case
reducing \eqref{eq:corrected} to the memoryless rule $\gamma^{\star}$ of
\eqref{eq:optrule}.
\end{theorem}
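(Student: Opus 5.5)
The plan is a Lagrangian / pointwise-optimization argument in three steps: existence of multipliers, instant-by-instant maximization of \eqref{eq:lagrangian}, and a check of the degenerate cases.

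\emph{Step 1: multipliers.} Relax the binary decisions to randomized schedules $\gamma_{k}=\phi(\cdot)\in[0,1]$. The objective and the budget are then linear in $\phi$. The exposure functional $S\rlag_{\ell}=\Exp[\zc_{k}\zc_{k+\ell}]$ is bilinear, through the product $\vsg_{k}\vsg_{k+\ell}$ in \eqref{eq:exposure}. Because of this, I would not argue via convex duality on the full problem. Instead I would take a first-order (KKT) route. Perturb the schedule on a set of magnitude configurations of small measure, around a solution. The first variation of each functional is linear in the perturbation, so a Lyapunov-type convexity argument (atomless perturbations, continuous $F_{|z|}$) makes the range of first variations convex. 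A separating-hyperplane argument then yields $(\lambda,\mu)$ with the sign conventions stated before \eqref{eq:lagrangian}. Here $\lambda\ge0$ is active only if the budget binds, and $\mu_{\ell}=\mu^{+}_{\ell}-\mu^{-}_{\ell}$ with complementary slackness. I expect this to be the main obstacle. The bilinear constraint is not concave, so one only obtains necessary conditions for a stationary point with multipliers. I would state it that way and rely on a constraint qualification: the blind schedule is strictly feasible at $\delta>0$, which is a Slater-type point for the linearized constraints.

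\emph{Step 2: pointwise maximization.} Fix all decisions except those at configurations affecting $\gamma_{k}$. By stationarity, the first variation of $\mathcal J$ with respect to $\gamma_{k}$ is
\[
z_{k}^{2}-\lambda-\sum_{\ell=1}^{L}\mu_{\ell}S\,\frac{\partial\rlag_{\ell}}{\partial\gamma_{k}},
\]
using \eqref{eq:flipdiff} for the exposure part. The decision may depend only on $|z_{k-L}|,\dots,|z_{k}|$, and more generally on $\Mset_{k}$. So the relevant quantity is the conditional expectation of this derivative given $\Mset_{k}$. Taking the tower property inside each expectation gives $\Exp[z_{k}^{2}-\lambda-(\cdot)\mid\Mset_{k}]$. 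Since $z_{k}^{2}$ is $\Mset_{k}$-measurable, this equals $z_{k}^{2}-\lambda-c_{k}$ with $c_{k}$ as in \eqref{eq:corrected}. Maximizing a linear functional over $\gamma_{k}\in[0,1]$ sets $\gamma_{k}=1$ where this is positive and $\gamma_{k}=0$ where it is negative. The tie set $\{z_{k}^{2}=\lambda+c_{k}\}$ is null under continuity of $F_{|z|}$, since $c_{k}$ is fixed by the other magnitudes. This gives the rule a.e. and also recovers a binary schedule.

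\emph{Step 3: measurability and degenerate cases.} $c_{k}$ is a conditional expectation given $\Mset_{k}$, so it is $\Mset_{k}$-measurable. The indicator is therefore a function of magnitudes and is admissible, and Lemma~\ref{lem:magnitude} applies. If $\mu=0$ then $c_{k}=0$ trivially. Under CSS, \eqref{eq:flipdiff} contains $z_{k}\zc_{k\pm\ell}=\vsg_{k\pm\ell}|z_{k}||z_{k\pm\ell}|\sgn(z_{k})\sgn(z_{k\pm\ell})$. Here $\vsg_{k\pm\ell}$ and the magnitudes are magnitude-measurable. Conditioning further on the whole magnitude process and using Definition~\ref{def:css}, the conditional expectation of the sign product is zero, as in the proof of Prop.~\ref{prop:exposure}. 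So $c_{k}=0$. With $c_{k}\equiv0$ the rule is $\mathbf1\{|z_{k}|>\sqrt{\lambda}\}$, and the budget fixes $\sqrt\lambda=q$, which is $\gamma^{\star}$ in agreement with Lemma~\ref{lem:memoryless}.
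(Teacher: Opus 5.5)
Your Steps 2 and 3 are the paper's proof. The paper also works with $p_k=\Exp[\gamma_k\mid\Mset_k]$ and uses that objective and exposure are affine in $p_k$ with the other decisions fixed. It reads off the upper level set of $z_k^2-\lambda-c_k$ from \eqref{eq:flipdiff}, and kills $c_k$ under CSS via $\Exp[z_k\zc_{k-\ell}\mid\Mset_k]=\vsg_{k-\ell}|z_k||z_{k-\ell}|\,\Exp[\sgn(z_k)\sgn(z_{k-\ell})\mid\Mset_k]=0$. The paper never argues that multipliers exist; it just posits them. Your Step 1 tries to supply this, but the constraint qualification you invoke fails. A Slater point is strictly feasible for the linearized constraints only when the constraint functions are convex, and here each side of $|\rlag_\ell|\le\delta_\ell$ is an indefinite quadratic. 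Concretely, write $Q_\ell(s):=\Exp[s_ks_{k+\ell}z_kz_{k+\ell}]$ in $s=1-2\phi$, with $s^o$ the solution and $s^b\equiv1-2\Gbar$ the blind rule. Linearizing $Q_\ell$ at $s^o$ and evaluating at $s^b$ gives $Q_\ell(s^b)-Q_\ell(s^b-s^o)=-4\Exp[(\gamma^o_k-\Gbar)(\gamma^o_{k+\ell}-\Gbar)z_kz_{k+\ell}]$. Nothing forces this into $(-S\delta_\ell,S\delta_\ell)$, so $\rlag_\ell(\text{blind})=0$ says nothing about the linearization at $\gamma^o$. Without a CQ verified at $\gamma^o$ itself, your separation argument yields only Fritz John multipliers. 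An abnormal one, with zero weight on $\Psi$, gives $\gamma_k=\mathbf1\{0>\lambda+c_k\}$, which is not the claimed form. So existence of normal multipliers remains an assumption, as it tacitly is in the paper.

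Your tie-set claim is also wrong as stated: $c_k$ is not fixed by the other magnitudes. The backward term carries an explicit factor $|z_k|$, and the sign correlation in it is conditioned on $|z_k|$. The forward term involves $\gamma_{k+\ell}$, whose window contains $|z_k|$ for every $\ell\le L$. Continuity of the marginal $F_{|z|}$ therefore does not make $\{z_k^2=\lambda+c_k\}$ null. You need a regularity assumption on the conditional law of $|z_k|$ given the other magnitudes, or you should read ``almost everywhere'' as ``off the tie set'', as the paper implicitly does. Finally, you share one slip with the paper. Perturbations of a window-$L$ rule are $\sigma(|z_{k-L}|,\dots,|z_k|)$-measurable, so the tower property yields conditioning on that window $\sigma$-algebra, not on $\Mset_k$. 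An $\Mset_k$-measurable $c_k$ reaches back through $\vsg_{k-\ell}$ to magnitudes older than $k-L$. Hence ``$\Mset_k$-measurable, so \eqref{eq:admissible} holds'' needs one of two fixes: enlarge the admissible class to $\Mset_k$-measurable rules, or condition $c_k$ on the window $\sigma$-algebra instead.
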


\begin{proof}
Write $p_{k}:=\Exp[\gamma_{k}\mid\Mset_{k}]\in[0,1]$, over which
\eqref{eq:admissible} optimizes. With the remaining decisions fixed,
both the objective and \eqref{eq:exposure} are affine in $p_{k}$, so the
stationarity condition of \eqref{eq:lagrangian} is a pointwise
comparison whose optimizer is the upper level set of the coefficient of
$p_{k}$, which by \eqref{eq:flipdiff} is $z_{k}^{2}-\lambda-c_{k}$;
$\Mset_{k}$-measurability holds because $c_{k}$ is a conditional
expectation given $\Mset_{k}$. For the degeneracy, $\mu=0$ gives
$c_{k}\equiv0$. Otherwise, since $\vsg_{k-\ell}$ and $|z_{j}|$, $j\le
k$, are $\Mset_{k}$-measurable,
$\Exp[z_{k}\zc_{k-\ell}\mid\Mset_{k}]
=\vsg_{k-\ell}|z_{k}||z_{k-\ell}|\chi_{\ell}(k)$ with
$\chi_{\ell}(k):=\Exp[\sgn(z_{k})\sgn(z_{k-\ell})\mid\Mset_{k}]$, and
likewise for the forward term. Under Def.~\ref{def:css} the conditional
signs are i.i.d.\ uniform, so $\chi_{\ell}\equiv0$ and every term of
$c_{k}$ vanishes; \eqref{eq:corrected} is then the upper level set of
$|z_{k}|$ at the quantile fixed by $\lambda$, namely $\gamma^{\star}$.
\end{proof}

\begin{remark}[Scope of Theorem~\ref{thm:structure}]
\label{rem:fixedpoint}
Equation \eqref{eq:exposure} is bilinear in $\gamma$, so $c_{k}$ depends
on decisions that themselves depend on $c$: \eqref{eq:corrected} is a
necessary condition characterizing the structure of an optimum, not a
construction of one. The forward term $\zc_{k+\ell}$ in
\eqref{eq:flipdiff} is moreover unavailable to a causal scheduler and is
dropped in Sec.~\ref{sec:algorithm}, which gives the rule that is
deployed and evaluated.
\end{remark}

\begin{corollary}[Memory is necessary]
\label{cor:memory}
The memoryless rule $\gamma^{\star}$ solves \eqref{eq:cproblem} if and
only if it is feasible for it. Under conditional sign symmetry it is
feasible at every $\delta$, and off it, for
$\delta_{\ell}<|\rlag_{\ell}(\gamma^{\star})|$, it is not: any solution
then differs from $\gamma^{\star}$ on a set of positive measure and, by
Thm.~\ref{thm:structure}, thresholds a score whose correction $c_{k}$ is
not identically zero, hence depends on past decisions.
\end{corollary}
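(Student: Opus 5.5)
The corollary has three parts: an ``iff'' between optimality and feasibility of $\gamma^{\star}$, feasibility at every $\delta$ under CSS, and infeasibility plus the claimed structure of the optimum off CSS. I will prove them in that order.

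\emph{The iff.} The key observation is that \eqref{eq:cproblem} is the unconstrained problem of Lemma~\ref{lem:memoryless} with the exposure constraints added. Adding constraints can only shrink the feasible set, so the constrained optimum is at most $\Psi(\gamma^{\star})$.
\begin{itemize}
\item If $\gamma^{\star}$ is feasible, it attains this upper bound and is therefore a solution.
\item If $\gamma^{\star}$ solves \eqref{eq:cproblem}, it is feasible by definition of a solution.
\end{itemize}
One point needs care. Lemma~\ref{lem:memoryless} is stated at rate exactly $\Gbar$, while \eqref{eq:cproblem} allows $\Exp[\gamma_k]\le\Gbar$. Since $z_k^2\ge0$, lowering the rate cannot increase $\Psi$. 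Concretely, a rate-$\Gbar'$ upper level set is dominated by the rate-$\Gbar$ one, because $\Exp[\gamma_k z_k^2]$ is monotone under inclusion of firing sets. So the bound $\Psi\le\Psi(\gamma^{\star})$ holds over the whole feasible set. The uniqueness clause of Lemma~\ref{lem:memoryless} also gives that any maximizer agrees with $\gamma^{\star}$ a.e.

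\emph{Feasibility under CSS.} By Prop.~\ref{prop:exposure}, $\rlag_\ell(\gamma^{\star})=0$ for every $\ell$, so $|\rlag_\ell(\gamma^{\star})|\le\delta_\ell$ for every $\delta_\ell\ge0$. Together with the iff, $\gamma^{\star}$ is then optimal at every $\delta$.

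\emph{Off CSS.} Fix $\ell$ with $\delta_\ell<|\rlag_\ell(\gamma^{\star})|$. This violates the $\ell$-th exposure constraint, so $\gamma^{\star}$ is infeasible and, by the iff, not a solution.
\begin{itemize}
\item Let $\gamma$ be any solution. Suppose $\gamma=\gamma^{\star}$ a.e. Then $\rlag_\ell(\gamma)=\rlag_\ell(\gamma^{\star})$, because $\rlag_\ell$ is an expectation through \eqref{eq:exposure} and is unchanged by null modifications. That would make $\gamma$ infeasible, a contradiction. Hence $\gamma$ differs from $\gamma^{\star}$ on a set of positive measure.
\item By Thm.~\ref{thm:structure}, $\gamma_k=\mathbf 1\{z_k^2>\lambda+c_k\}$ a.e. Suppose $c_k\equiv0$. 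The rule is then memoryless, $\mathbf1\{|z_k|>\sqrt{\lambda}\}$ (with all firing if $\lambda<0$), and its rate pins $\lambda$.
\item The delicate case is a budget that is slack at the optimum. Here I would argue that a memoryless level set at threshold $t\ge q$ is dominated by $\gamma^{\star}$ if feasible. Otherwise I would add the hypothesis that the budget binds, so that the threshold coincides with $q$ and the rule is $\gamma^{\star}$ a.e. This contradicts the previous step, so $c_k\not\equiv0$.
\end{itemize}

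\emph{Dependence on past decisions.} By the computation in the proof of Thm.~\ref{thm:structure}, $c_k$ is built from terms $\vsg_{k-\ell}|z_k||z_{k-\ell}|\chi_\ell(k)$, which involve $\vsg_{k-\ell}=1-2\gamma_{k-\ell}$. A nonzero $c_k$ therefore loads on past decisions. This is exactly how the corollary phrases it, so a precise notion of ``depends'' is not needed.

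\emph{Expected obstacle.} The main difficulty is the step excluding $c_k\equiv0$ for a thresholded rule at a level other than $q$. The natural case is a higher threshold with a slack budget. Such a rule may be feasible with lower $\Psi$, and then a solution could in principle be memoryless but not $\gamma^{\star}$. I would handle this by noting that Thm.~\ref{thm:structure} makes the claim about the correction of an optimum at the stated multipliers. If needed, I would state that the budget binds, which holds generically since $\Psi$ strictly increases with rate.
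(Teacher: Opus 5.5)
Your treatment of the equivalence and of feasibility under CSS follows the paper's argument: the feasible set of \eqref{eq:cproblem} lies inside the admissible class, and Prop.~\ref{prop:exposure} gives feasibility. You also notice something the paper glosses over: Lemma~\ref{lem:memoryless} is stated at rate exactly $\Gbar$, whereas \eqref{eq:cproblem} only imposes $\Exp[\gamma_k]\le\Gbar$. That is a genuine refinement.

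The gap is in the last step, and you located it yourself. Arguing by contradiction from $c_k\equiv0$ leaves open a memoryless rule $\mathbf 1\{|z_k|>t\}$ with $t>q$ and a slack budget. Neither of your remedies closes it.
\begin{itemize}
\item Domination by $\gamma^{\star}$ is vacuous in this regime. There $\gamma^{\star}$ is infeasible, so a feasible rule with smaller $\Psi$ is not beaten by it.
\item The fact that $\Psi$ increases along nested firing sets does not show that the budget binds at the \emph{constrained} optimum. Enlarging the firing set changes $\rlag_\ell$ and can break the exposure tolerance, which is exactly the trade-off the corollary is about.
\item Assuming the budget binds changes the statement.
\end{itemize}

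What is missing is complementary slackness for the inequality budget. Thm.~\ref{thm:structure} does not state it, but the multiplier rule for an inequality constraint supplies it: $\lambda\ge0$ and $\lambda(\Exp[\gamma_k]-\Gbar)=0$.
\begin{itemize}
\item If the budget is slack, then $\lambda=0$. Then $c_k\equiv0$ would give $\gamma_k=\mathbf 1\{z_k^2>0\}$, which fires almost surely because $F_{|z|}$ is continuous. Its rate is $1>\Gbar$, a contradiction.
\item If the budget binds, $\mathbf 1\{z_k^2>\lambda\}$ at rate $\Gbar$ makes $\sqrt{\lambda}$ a $(1-\Gbar)$-quantile of $|z_k|$. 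Hence $\gamma=\gamma^{\star}$ a.e., which you have already shown to be infeasible.
\end{itemize}
Either way $c_k\not\equiv0$, with no added hypothesis. Alternatively, read the budget as the paper's Lagrangian does, as an equality with $\lambda\in\mathbb R$, and the slack case never arises.

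The paper closes this step with a different pivot. It invokes Thm.~\ref{thm:structure} with $\mu\neq0$ and $\chi_\ell\not\equiv0$. The case $\mu=0$ is excluded only implicitly, by the theorem's degeneracy clause, which returns the infeasible $\gamma^{\star}$. That clause silently fixes $\lambda$ at the binding quantile, so the same budget issue is hidden there. The paper's route also does not address possible cancellation across lags, or between the backward and forward terms of $c_k$. Your contradiction route, once completed as above, needs no such non-cancellation argument.
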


\begin{proof}
By Lemma~\ref{lem:memoryless}, $\gamma^{\star}$ maximizes $\Psi$ over
the admissible class at rate $\Gbar$, which contains the feasible set of
\eqref{eq:cproblem}; a feasible maximizer over a superset is a maximizer
over the subset, and an infeasible rule is not a solution. Feasibility
under Def.~\ref{def:css} is Prop.~\ref{prop:exposure}. The last claim is
Thm.~\ref{thm:structure} with $\mu\neq0$ and $\chi_{\ell}\not\equiv0$.
\end{proof}

Corollary~\ref{cor:memory} is the central structural claim of this
paper: a pointwise objective is maximized without memory
(Lemma~\ref{lem:memoryless}), while the same objective under the
exposure constraint is not, and the crossover between the two is exactly
conditional sign symmetry, hence exactly Gaussianity.

\subsection{A Causal Two-Threshold Realization}
\label{sec:algorithm}

Theorem~\ref{thm:structure} states that the optimum thresholds a score
carrying a correction built from past magnitudes and past decisions,
without specifying its form. Proposition~\ref{prop:decouple} determines
that form. By \eqref{eq:factorized} the lag-one exposure factors as
$|\rlag_{1}|=2S^{-1}\beta\,|m_{B}|$, with
$\beta:=\Prob(\gamma_{k}\neq\gamma_{k+1})$ the boundary density and
$m_{B}:=\Exp[z_{k}z_{k+1}\mid\gamma_{k}\neq\gamma_{k+1}]$ the
conditional cross moment on the boundary. Of the two factors only $\beta$ is under the schedule's direct control at a fixed budget, and by
\eqref{eq:boundarydensity} it falls as the firing indicator becomes more
persistent. At a fixed rate $\Gbar$ and a fixed firing-set location,
therefore, the exposure constraint is relaxed in exactly one way: by
\emph{lengthening runs}, which reduces the number of boundaries the same
number of firings must create.

This identifies the minimal admissible correction. Leaving the threshold
unchanged creates an isolated firing at every crossing and attains the
largest $\beta$ available at that budget; lowering it after a firing
extends crossings into runs and reduces $\beta$ without relocating the
firing set. Such a rule consults only past magnitudes and past
decisions, so it satisfies \eqref{eq:admissible}, and it is a
hysteresis: a high threshold fixing where the firing set sits, and a
lower one fixing how it is arranged in time --- the second degree of freedom the memoryless class lacks.

We deploy the minimal causal rule with this structure. Fix a window $L$
and a \emph{continuation gate} $\fcont\in[0,1]$, let
$\varepsilon_{\mathrm{hi}}$ be set by calibration
(Sec.~\ref{sec:calibration}) and put
$\varepsilon_{\mathrm{lo}}=\fcont\,\varepsilon_{\mathrm{hi}}$. With
$n_{k-1}$ the length of the run in progress at time $k-1$,
\begin{equation}
\gamma_{k}=
\begin{cases}
1, & |z_{k}|>\varepsilon_{\mathrm{hi}}\quad\text{(start)},\\
1, & 0<n_{k-1}<L\ \text{and}\ |z_{k}|>\varepsilon_{\mathrm{lo}}
      \quad\text{(continue)},\\
0, & \text{otherwise}.
\end{cases}
\label{eq:hyst}
\end{equation}
Rule \eqref{eq:hyst} realizes \eqref{eq:corrected} with a negative
correction --- the threshold falls from $\varepsilon_{\mathrm{hi}}$ to
$\varepsilon_{\mathrm{lo}}$ --- exactly when the past $L$ decisions
place the instant inside a run. It reads only past magnitudes and past
decisions, so \eqref{eq:admissible} and Lemma~\ref{lem:magnitude} still
apply and the attack remains exactly stealthy against $\Gmag$. At
$\fcont=1$ there is no continuation and \eqref{eq:hyst} is the
memoryless rule; as $\fcont\to0$ every start runs to length $L$, giving
a fixed hold; intermediate $\fcont$ traverses the frontier.
Algorithm~\ref{alg:corr} keeps one counter and makes at most two
comparisons per step with no plant matrix.

\begin{algorithm}[t]
\caption{Correlation-stealthy budgeted scheduling}
\label{alg:corr}
\begin{algorithmic}[1]
\Require budget $\Gbar$, window $L$, gate $\fcont$, nominal record
         $z_{1},\dots,z_{N}$
\State choose the level $\alpha$ such that \eqref{eq:hyst} fires at rate
       $\Gbar$ on the record; set
       $\varepsilon_{\mathrm{hi}}\gets\hat\varepsilon_{N}(\alpha)$ by
       \eqref{eq:conformal} applied to $\{|z_{k}|\}$, and
       $\varepsilon_{\mathrm{lo}}\gets\fcont\,\varepsilon_{\mathrm{hi}}$
\State $n\gets0$
\For{$k=1,2,\dots$}
  \State $n\gets 1$ if $|z_{k}|>\varepsilon_{\mathrm{hi}}$; else
         $n\gets n+1$ if $0<n<L$ and
         $|z_{k}|>\varepsilon_{\mathrm{lo}}$; else $n\gets0$
  \State $\gamma_{k}\gets\mathbf1\{n>0\}$;\quad transmit
         $\zc_{k}\gets(1-2\gamma_{k})z_{k}$
\EndFor
\end{algorithmic}
\end{algorithm}

\section{Calibration, Budget Guarantee and the Price of Memory}
\label{sec:calibration}

\subsection{Where the Budget Guarantee Holds}
\label{sec:budgetguarantee}

The memoryless rule and the two-threshold rule stand in different
relations to Lemma~\ref{lem:budget}. For the memoryless rule the firing
event \emph{is} the event $\{|z_{k}|>\hat\varepsilon_{N}\}$, a threshold
on an exchangeable scalar score, and \eqref{eq:conformalguarantee}
applies to it directly. For \eqref{eq:hyst} the firing event is a union
of a threshold event and a continuation event, and the latter depends on
the run state $n_{k-1}$. The following proposition separates the two.

\begin{proposition}[Start-rate guarantee]
\label{prop:startrate}
Let $\varepsilon_{\mathrm{hi}}=\hat\varepsilon_{N}(\alpha)$ be the
split-conformal threshold \eqref{eq:conformal} applied to the nominal
magnitude record $\{|z_{k}|\}_{k\le N}$, and let
$\mathrm{st}_{k}:=\mathbf1\{|z_{k}|>\varepsilon_{\mathrm{hi}}\}$ denote
a run start. If $\{|z_{k}|\}$ is exchangeable then
\begin{equation}
\Exp[\mathrm{st}_{k}]\le\alpha
\label{eq:startguarantee}
\end{equation}
at every horizon, distribution-free and finite-sample. The realized
firing rate $\hat\Gamma:=\Exp[\gamma_{k}]$ of \eqref{eq:hyst} satisfies
\begin{equation}
\hat\Gamma\le L\,\alpha \quad\text{pathwise},\qquad
\hat\Gamma\approx\alpha\,\Exp[B],
\label{eq:ratebounds}
\end{equation}
where $\Exp[B]$ is the mean run length, the second relation holding in
the renewal limit.
\end{proposition}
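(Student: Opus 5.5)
The plan is to treat the start indicator and the run structure separately: first reduce the start event to a single-score conformal exceedance, then control the firing rate by charging each firing to the start that opened its run.

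\emph{Start-rate bound \eqref{eq:startguarantee}.} The event $\{\mathrm{st}_{k}=1\}=\{|z_{k}|>\hat\varepsilon_{N}(\alpha)\}$ is an exceedance of a fixed order statistic of the calibration record by a fresh score. It involves no run state, so Lemma~\ref{lem:budget} applies with $s_{j}=|z_{j}|$ for $j\le N$ and $s_{N+1}=|z_{k}|$. This gives $\Exp[\mathrm{st}_{k}]=\Prob(s_{N+1}>s_{(j_{N}(\alpha))})=1-j_{N}(\alpha)/(N+1)\le\alpha$, finite-sample and distribution-free. The only point to check is that exchangeability of $\{|z_{k}|\}$ makes $(|z_{1}|,\dots,|z_{N}|,|z_{k}|)$ exchangeable for every $k>N$. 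This follows because any finite subfamily of an exchangeable sequence is exchangeable, and it gives the bound at every horizon $k$.

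\emph{Pathwise bound $\hat\Gamma\le L\alpha$.} I would use a charging argument. By \eqref{eq:hyst}, every run begins with a start: a continuation requires $n_{k-1}>0$, and the counter becomes positive only through the start branch. Every run also has length at most $L$, since continuation requires $n_{k-1}<L$. Hence on every path
\[
\textstyle\sum_{k=1}^{T}\gamma_{k}\le L\sum_{k=1}^{T}\mathrm{st}_{k}+L,
\]
where the additive $L$ covers a run truncated at the horizon. A start occurring inside a run resets the counter and opens a new run, which only helps the inequality. Dividing by $T$ and taking expectations with \eqref{eq:startguarantee} gives $\Exp[\gamma_{k}]\le L\alpha$. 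Read in the long-run sense, using the ergodic part of Lemma~\ref{lem:budget} for the start frequency, it gives the same bound for the realized rate.

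\emph{Renewal relation $\hat\Gamma\approx\alpha\,\Exp[B]$.} I would view the run starts as renewal epochs. The firing count then equals the sum of the run lengths $B_{i}$ over runs begun by time $T$. The renewal-reward theorem gives $\tfrac1T\sum_{k}\gamma_{k}\to(\text{start rate})\cdot\Exp[B]$, and the start rate tends to $\alpha$ by Lemma~\ref{lem:budget}.

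The main obstacle is this last step. Off Gaussianity the magnitudes are dependent, so the run lengths are neither i.i.d.\ nor independent of the start times, and the start rate equals $1-F(\varepsilon_{\mathrm{hi}})$ only up to the conformal slack. I would therefore state it honestly as an approximation valid in the renewal limit, namely when magnitude dependence decays within roughly $L$ steps. The $\approx$ absorbs both the cycle dependence and the gap between the conformal level and the exact exceedance probability, without claiming an identity.
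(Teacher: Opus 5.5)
Your proposal matches the paper's proof: the start bound is Lemma~\ref{lem:budget} applied verbatim, the bound $\hat\Gamma\le L\alpha$ comes from charging each firing to the start that opened its run (each start licenses at most $L-1$ continuations), and the rate relation is the renewal-reward identity. Your extra details (exchangeability of finite subfamilies, the reset-on-start case, explicit caveats on the renewal approximation) only sharpen that argument. One small overstatement: the ergodic part of Lemma~\ref{lem:budget} sends the start frequency to $1-F_{s}(\hat\varepsilon_{N}(\alpha))$, which is $\le\alpha$ only on average over calibration records, so on a fixed draw it bounds the long-run firing rate by $L\,(1-F_{s}(\hat\varepsilon_{N}(\alpha)))$ rather than by $L\alpha$ --- the same looseness sits in the paper's word ``pathwise''.
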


\begin{proof}
The start event is a threshold on the scalar score $|z_{k}|$ at the
order statistic $\hat\varepsilon_{N}(\alpha)$, so
\eqref{eq:startguarantee} is \eqref{eq:conformalguarantee} verbatim.
For the first bound in \eqref{eq:ratebounds}, each start licenses at
most $L-1$ continuations by the second branch of \eqref{eq:hyst}, so on
every sample path the number of firings in any interval is at most $L$
times the number of starts. The second relation is the renewal-reward
identity for an alternating sequence of runs and gaps.
\end{proof}

Proposition~\ref{prop:startrate} locates the guarantee precisely. It
holds for the run-start rate, exactly and without distributional
assumptions, at whatever level $\alpha$ the order statistic is taken. It
does not hold for the realized firing rate, because $\gamma_{k}$ is not
a threshold on an exchangeable scalar; what the realized rate inherits
is the pathwise bound $\hat\Gamma\le L\alpha$ and the renewal relation
$\hat\Gamma\approx\alpha\Exp[B]$, in which $\Exp[B]$ depends on the
innovation law and is therefore not available in closed form to a
distribution-free adversary.

This determines the choice of $\alpha$. Setting $\alpha=\Gbar$ and
reading \eqref{eq:ratebounds} shows that the rule would then fire at
approximately $\Gbar\,\Exp[B]$, overshooting the budget by the mean run
length. Line~1 of Algorithm~\ref{alg:corr} therefore calibrates the
\emph{level}: $\alpha$ is chosen so that \eqref{eq:hyst} meets the rate
$\Gbar$ on the nominal record, with $\varepsilon_{\mathrm{hi}}$ kept an
order statistic of that record at the selected $\alpha$. The start
guarantee \eqref{eq:startguarantee} is retained exactly; the realized
rate becomes a plug-in quantity whose accuracy is governed by
Prop.~\ref{prop:neff}.

\subsection{The Price of Memory}
\label{sec:price}

\begin{proposition}[Effective calibration sample]
\label{prop:neff}
The guarantee of Lemma~\ref{lem:budget} requires exchangeability and not
independence, and is in that sense unaffected by memory: by
Prop.~\ref{prop:startrate} it continues to hold for the run-start rate
at every window length $L$. The \emph{variance} of the realized rate is
affected. Calibration contributes a standard deviation
$\sqrt{\Gbar(1-\Gbar)/\Neff}$ with
$\Neff=N/(1+2\sum_{\ell\ge1}\rho_{\ell})$, where $\rho_{\ell}$ is the
lag-$\ell$ autocorrelation of the firing indicator. A rule firing in runs of length $L$ has
$\rho_{\ell}\approx1-\ell/L$ for $\ell<L$, whence $\Neff\approx N/L$.
\end{proposition}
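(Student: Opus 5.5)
The proposition has three parts, and I will prove them in that order: the exchangeability statement, the variance of the realized rate, and the run-length approximation.

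\emph{Exchangeability.} The first claim needs no new work. The run-start event is $\{|z_k|>\hat\varepsilon_N(\alpha)\}$, and it is unaffected by the continuation branch of \eqref{eq:hyst}. Prop.~\ref{prop:startrate} applies Lemma~\ref{lem:budget} to it verbatim, for any $L$. This guarantee rests only on the uniform rank of $s_{N+1}$ among exchangeable scores, so no independence assumption enters.

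\emph{Variance of the realized rate.} Write the realized rate on the calibration record as $\hat\Gamma_N=\frac1N\sum_{k=1}^N\gamma_k$, with $\gamma_k$ stationary and mean $\Gbar$. I will use the standard stationary-variance identity
\begin{equation*}
\operatorname{Var}(\hat\Gamma_N)=\frac{\Gbar(1-\Gbar)}{N}\Big[1+2\sum_{\ell=1}^{N-1}\big(1-\tfrac{\ell}{N}\big)\rho_\ell\Big].
\end{equation*}
Here $\operatorname{Var}(\gamma_k)=\Gbar(1-\Gbar)$ because $\gamma_k$ is Bernoulli. When $\sum_\ell|\rho_\ell|<\infty$, the bracket tends to $1+2\sum_{\ell\ge1}\rho_\ell$ as $N\to\infty$, which gives the displayed $\Neff$.

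\emph{Link to the deployed rate.} Algorithm~\ref{alg:corr} selects $\alpha$ so that the rule hits $\Gbar$ on the record. The deployed rate therefore inherits the record's estimation error. To first order, via a delta-method or ergodic CLT argument for the plug-in level, this error is the error of $\hat\Gamma_N$. I will state this step at the heuristic level the proposition itself uses ("contributes"). I will not attempt a rigorous CLT for an order statistic of dependent data.

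\emph{Runs of length $L$.} I take $\gamma$ to be a renewal sequence of blocks of exactly $L$ ones, separated by gaps that are long compared with $L$; this is the small-$\Gbar$ regime. Fix $k$ uniformly placed inside a run. Then $\gamma_{k+\ell}=1$ exactly when $k$ lies among the first $L-\ell$ positions of its run, which has conditional probability $(L-\ell)/L$. So
\begin{equation*}
\Prob(\gamma_k=\gamma_{k+\ell}=1)\approx\Gbar(1-\ell/L)
\end{equation*}
for $\ell<L$, up to $O(\Gbar^2)$ terms from distinct runs. This gives $\rho_\ell=\big[\Gbar(1-\ell/L)-\Gbar^2\big]/\big[\Gbar(1-\Gbar)\big]\approx1-\ell/L$, which is consistent with the $\ell=1$ formula in Prop.~\ref{prop:decouple}. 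For $\ell\ge L$ the correlations are $O(\Gbar)$, and I neglect them. Summing,
\begin{equation*}
1+2\sum_{\ell=1}^{L-1}(1-\ell/L)=1+(L-1)=L,
\end{equation*}
so $\Neff\approx N/L$.

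\emph{Main obstacle.} The hard step is making the neglected terms honest. The cross-run correlations at lags $\ell\ge L$ are small only individually. Their sum stays $O(\Gbar)\cdot(\text{mean gap})$, which is $O(1)$ unless the gaps are nearly geometric, in which case the covariance decays. I would handle this by modelling the gaps as approximately geometric, which makes run starts approximately a Bernoulli($\alpha$) process. Under that model $\Cov(\gamma_k,\gamma_{k+\ell})$ is exactly zero beyond the run span, apart from the small negative term reflecting that a new run cannot start inside an existing one. That term is $O(\Gbar)$ relative to the main sum, so the $\approx$ in the statement absorbs it.
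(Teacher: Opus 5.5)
Your argument is correct at the level of rigor the proposition itself adopts. It matches the paper's proof at both ends: the start-rate claim is Prop.~\ref{prop:startrate}, and the triangular autocorrelation sums to $L$. The variance step, however, takes a different route.

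The paper starts from the law of the coverage of an order statistic, $F_{s}(\hat\varepsilon_{N})\sim\mathrm{Beta}(j_{N},N+1-j_{N})$, and reads off $\Gbar(1-\Gbar)/(N+2)$. It then simply asserts that serial dependence replaces $N$ by $\Neff$ and that the autocorrelation is triangular. You instead take the Fej\'er-weighted variance of the rule's own empirical firing fraction on the record. You then transfer it to the deployed rate by linearizing the level-matching in Line~1 of Algorithm~\ref{alg:corr}.

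The two routes buy different things. The paper's gives an exact finite-$N$ constant tied to the order statistic of Lemma~\ref{lem:budget}. But that Beta law is a fact about i.i.d.\ scores and a memoryless threshold, and a long-run-variance correction to it would naturally involve the autocorrelation of the score exceedances (the starts). The firing indicator's $\rho_{\ell}$ therefore enters only by assertion. Your route explains why the firing indicator is the right process: Line~1 calibrates against the rule's own rate. It also derives $\rho_{\ell}\approx1-\ell/L$ consistently with Prop.~\ref{prop:decouple}, and it exposes the assumption behind $\Neff\approx N/L$ that the paper leaves implicit.

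Two refinements to your obstacle paragraph.

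First, the cross-run sum is larger than you estimate. The mean gap is of order $L/\Gbar$, so cross-run correlations of order $\Gbar$ sum to order $L$. That is comparable to the main term, not $O(1)$. For runs of length $L$ separated by i.i.d.\ gaps $G$, the renewal CLT gives the inflation factor $L\operatorname{Var}(G)/(\Exp[G](L+\Exp[G]))$. This is about $L(1-\Gbar)$ for geometric gaps and zero for deterministic ones.

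Second, geometric-like gaps are not an extra assumption for this particular rule. Algorithm~\ref{alg:corr} resets the counter at every exceedance of $\varepsilon_{\mathrm{hi}}$, so $\gamma_{k}$ depends only on $|z_{k-L+1}|,\dots,|z_{k}|$. Hence $\gamma$ is $(L-1)$-dependent whenever the magnitudes are independent, and its covariance vanishes exactly beyond the run span. Your ``small negative term'' then does not arise. What is genuinely neglected, by you and by the paper alike, is the serial dependence of $\{|z_{k}|\}$ itself.
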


\begin{proof}
The first statement is Prop.~\ref{prop:startrate}. For the variance, the
realized rate is $1-F_{s}(\hat\varepsilon_{N})$ with
$\hat\varepsilon_{N}$ the $j_{N}$-th order statistic of the calibration
record, so for exchangeable scores
$F_{s}(\hat\varepsilon_{N})\sim\mathrm{\beta}(j_{N},N+1-j_{N})$ and the
realized rate has variance $\Gbar(1-\Gbar)/(N+2)$. Serial dependence
replaces $N$ by $\Neff$ via the long-run-variance expansion, and the
triangular autocorrelation $\rho_{\ell}=1-\ell/L$ gives
$1+2\sum_{\ell<L}(1-\ell/L)=L$.
\end{proof}

The mechanism is that overlapping windows make consecutive firing
decisions redundant: a record of $N$ nominal samples carries only
$\Neff$ independent-equivalent decisions, and it is $\Neff$, not $N$,
that sets the dispersion of the realized rate about the budget.

\subsection{Choosing the Window}
\label{sec:window}
By Prop.~\ref{prop:neff} the threshold $\varepsilon_{\mathrm{hi}}$ is
located not by the size of the calibration record but by the number of
exceedances within it, $\alpha\Neff\approx N\Gbar/L$, so a deep window
exhausts a short record at small budgets. Two ceilings on $L$ follow.
Run structure suppresses exposure as $1/L$ only while $\rlag_{\ell}$ is
materially non-zero at lag $\ell\le L$, so beyond the correlation length
a deeper window costs $\Psi$ and buys nothing; and $L$ must leave enough
effective exceedances to locate the $(1-\alpha)$-quantile, giving
$L\lesssim N\Gbar$. The first ceiling is the tighter of the two.

\section{Results}
\label{sec:results}

\emph{Setup:} The plant is
$A=\left[\begin{smallmatrix}0.95&0.02\\0&0.90\end{smallmatrix}\right]$,
$C=[1\ \ 0]$, $Q=0.01I_{2}$, $R=0.05$, giving $S=0.0752$,
$K=[0.335\ \ 0.029]^{\top}$ and $\tr\Lyap(A,KK^{\top})=1.182$. Both noises
are Gaussian under \textbf{N1}; under \textbf{N2} the sensor noise is the
variance-matched mixture
$0.95\,\mathcal N(0,0.0227)+0.05\,\mathcal N(0,0.568)$, with
$\kexc(z)=7.49$ against $0.00$ for N1. The two regimes present the same
$S$, so distributional shape is the only manipulated variable.
Calibration uses $N=5000$ nominal samples, deployment $T=50{,}000$ steps,
and $50$ Monte-Carlo realizations.

Every schedule is reported by two numbers: the energy capture $\Psi$ of
\eqref{eq:Dinf}, and $\pi$, the power of a lag-one innovation-whiteness
monitor of window $W=5000$ against a $1\%$ null calibrated empirically on
nominal data. The asymptotic null is unusable here, because by
Prop.~\ref{prop:exposure} the nominal innovation is white but not
independent and the heavy tail of N2 inflates the variance of the
statistic. All schedules are magnitude measurable, so by
Lemma~\ref{lem:magnitude} every detector in $\Gmag$ retains its nominal
false-alarm rate exactly and is not reported. Two references bracket the
comparison: the \emph{blind} rule, which fires independently of the
innovation at rate $\Gbar$ and captures $\Psi=\Gbar$, and the
\emph{memoryless} rule \eqref{eq:optrule}, which maximizes $\Psi$ under no
exposure constraint.

\subsection{The Corrected Rule at Matched Budget}

Table~\ref{tab:main} is the principal result. The memoryless rule is
detected at every budget, with $\pi$ from $0.128$ to $1.000$; the blind
rule is undetectable but captures only $\Gbar$. The corrected rule
\eqref{eq:hyst}, reported at its least-exposure $(L,\fcont)$ for each
budget, retains $80$--$96\%$ of the memoryless damage while cutting the
monitor power by factors of $2.3$ to $71$. The reduction is largest where
the memoryless rule is most exposed, at $\Gbar=0.02$, so the advantage of
correction is greatest at the small budgets to which a resource-limited
adversary is confined.

\begin{table}[h]\centering
\caption{The blind and memoryless rules bracket the achievable set; the
corrected rule is \eqref{eq:hyst} at its least-exposure $(L,\fcont)$.
$\Psi^{\star}$ and $\pi^{\star}$ are the memoryless values (N2, $W=5000$).}
\label{tab:main}
\setlength{\tabcolsep}{1.8pt}\footnotesize
\begin{tabular}{@{}l rr rr c rr rr@{}}
\toprule
& \multicolumn{2}{c}{blind}
& \multicolumn{2}{c}{memoryless \cite{din2026distributionfreebudgetedstealthyattack}}
& \multicolumn{5}{c}{corrected \eqref{eq:hyst}} \\
\cmidrule(lr){2-3}\cmidrule(lr){4-5}\cmidrule(lr){6-10}
$\Gbar$ & $\Psi$ & $\pi$ & $\Psi^{\star}$ & $\pi^{\star}$
        & $(L,\fcont)$ & $\Psi$ & $\pi$
        & $\tfrac{\Psi}{\Psi^{\star}}$ & $\tfrac{\pi^{\star}}{\pi}$ \\
\midrule
$0.02$ & $0.020$ & $0.010$ & $0.343$ & $1.000$ & $(4,0.30)$  & $0.294$ & $0.014$ & $0.86$ & $71.3$ \\
$0.05$ & $0.051$ & $0.010$ & $0.480$ & $0.369$ & $(6,0.50)$  & $0.455$ & $0.020$ & $0.95$ & $18.4$ \\
$0.10$ & $0.103$ & $0.016$ & $0.607$ & $0.255$ & $(10,0.15)$ & $0.485$ & $0.038$ & $0.80$ & $6.7$ \\
$0.20$ & $0.202$ & $0.024$ & $0.761$ & $0.858$ & $(10,0.15)$ & $0.626$ & $0.092$ & $0.82$ & $9.3$ \\
$0.30$ & $0.301$ & $0.040$ & $0.853$ & $0.745$ & $(10,0.15)$ & $0.724$ & $0.118$ & $0.85$ & $6.3$ \\
$0.50$ & $0.502$ & $0.028$ & $0.952$ & $0.128$ & $(4,0.30)$  & $0.915$ & $0.056$ & $0.96$ & $2.3$ \\
\bottomrule
\end{tabular}
\end{table}

\emph{Gaussian Recovery}: Under N1 the hypothesis of Prop.~\ref{prop:exposure} holds, the exposure
constraint is inactive for every admissible schedule, and memory should be unnecessary. It is: against the memoryless rule the whiteness
monitor stays at its nominal level, $\pi\in[0.000,0.024]$ at every
budget and window.

\subsection{Why Windowed Scores and Randomization Fail}

Table~\ref{tab:camps} compares the corrected rule against the alternatives
of Sec.~\ref{sec:dither} at exactly matched firing count. The scores fall
into two camps, as predicted. A one-step magnitude predictor and a
magnitude-difference score keep the firing set on the tail and so keep the
damage, but they fire in isolated instants and keep the exposure with it. A
windowed energy and a fixed hold remove the exposure, but only by firing in
long runs that disperse the firing set off the tail, and the damage
collapses to the blind value. The corrected rule is the only entry that
moves the two functionals separately.

\begin{table}[t]
\centering
\caption{Alternative scores at matched firing count
(N2, $\Gbar=0.02$, $W=5000$).}
\label{tab:camps}
\setlength{\tabcolsep}{3pt}\footnotesize
\begin{tabular}{@{}lrr@{}}
\toprule
Score & $\Psi$ & $\pi$ \\
\midrule
Memoryless~\cite{din2026distributionfreebudgetedstealthyattack}
    & $0.350$ & $1.000$ \\
One-step mag.\ predictor
    & $0.349$ & $1.000$ \\
Magnitude difference
    & $0.282$ & $0.391$ \\
\addlinespace[2pt]
Windowed energy, $L=50$
    & $0.049$ & $0.022$ \\
Fixed hold, $L=50$
    & $0.051$ & $0.011$ \\
\addlinespace[2pt]
Corrected \eqref{eq:hyst}
    & $0.294$ & $0.014$ \\
\addlinespace[2pt]
Blind
    & $0.020$ & $0.010$ \\
\bottomrule
\end{tabular}
\end{table}

Figure~\ref{fig:stealth}(a) exhibits Rem.~\ref{rem:dither} at the smallest
budget. Randomizing the score moves the location of the firing set, so both
$\pi$ and $\Psi$ fall with the scale $\sigma_{d}$. The monitor returns to
its nominal level only at $\sigma_{d}=4\sqrt S$, and there the energy
capture is $0.029$ against the blind value $0.020$: the randomized schedule
becomes stealthy by becoming blind, having discarded $92\%$ of the damage it
started with. The corrected rule reaches the same exposure, $\pi=0.014$, at
$\Psi=0.294$, an order of magnitude more damage at matched stealth. This is
Prop.~\ref{prop:decouple} read on data --- randomization perturbs the
location, while the corrected rule rearranges the run structure and leaves
the location alone.

\subsection{Budget Accuracy and the Price of Memory}

The start threshold is a plug-in order statistic, so the realized budget
inherits the accuracy of the calibration record.
Figure~\ref{fig:stealth}(b) shows the effective budget
$\hat{\bar\Gamma}$ converging to the target with a spread that contracts as
$N^{-1/2}$, and shows what depth costs: at $N=250$ the $L=50$ rule fires at
$0.177$ against a target of $0.02$, while $L=4$ fires at $0.032$, and the
three depths agree to within $0.003$ only by $N=2\times10^{4}$. The
mechanism is Prop.~\ref{prop:neff}: calibration sees only
$\alpha\Neff\approx N\Gbar/L$ exceedances, so a window of depth $L$
consumes the record by the factor $L$. The pathwise bound
$\hat\Gamma\le L\alpha$ of Prop.~\ref{prop:startrate} holds in every draw
regardless.

\begin{figure}[t]\centering
\includegraphics[width=\columnwidth]{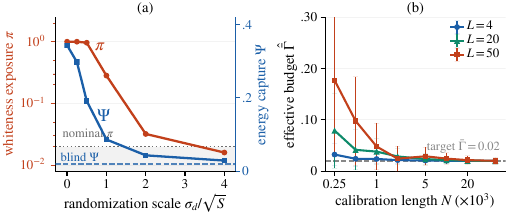}
\caption{(a) Randomization at $\Gbar=0.02$ (N2, $W=5000$)
. Shaded is the nominal level $\pi\le0.02$. (b) Effective budget
$\hat{\bar\Gamma}$ against calibration length (N2, $\Gbar=0.02$,
$\fcont=0$); markers are means over calibration draws, bars one standard
deviation, and the dashed line the target $\Gbar=0.02$.}
\label{fig:stealth}
\end{figure}

\subsection{Real Vehicle Data}

We further validate the approach on a heavy-duty-truck J1939 CAN record
\cite{biggs2024modeling} on a $10$\,Hz grid. The deployment segments are
far shorter than the reporting window $W$, so $\pi$ is not estimable here;
we report instead the lag-one exposure $\hat\rlag_{1}$, the functional the
whiteness monitor integrates, which is estimable from a single segment.
Table~\ref{tab:real} reports the result: the memoryless schedule induces
$|\hat\rlag_{1}|$ between $0.027$ and $0.065$, and the corrected rule
reduces it by factors of $1.3$ to $4.4$ at every budget while retaining
$88$--$100\%$ of the energy capture.

\begin{table}[t]\centering
\caption{Exposure on real truck data.}
\label{tab:real}
\setlength{\tabcolsep}{3pt}\footnotesize
\begin{tabular}{@{}lrrrrr@{}}
\toprule
& \multicolumn{2}{c}{memoryless \cite{din2026distributionfreebudgetedstealthyattack}}
& \multicolumn{3}{c}{corrected \eqref{eq:hyst}} \\
\cmidrule(lr){2-3}\cmidrule(lr){4-6}
$\Gbar$ & $\Psi$ & $|\hat\rlag_{1}|$ & $(L,\fcont)$ & $\Psi$ & $|\hat\rlag_{1}|$ \\
\midrule
$0.02$ & $0.424$ & $0.027$ & $(2,0.15)$ & $0.398$ & $0.013$ \\
$0.05$ & $0.556$ & $0.029$ & $(4,0.30)$ & $0.519$ & $0.012$ \\
$0.10$ & $0.681$ & $0.053$ & $(6,0.15)$ & $0.597$ & $0.012$ \\
$0.20$ & $0.811$ & $0.065$ & $(6,0.15)$ & $0.724$ & $0.020$ \\
$0.30$ & $0.886$ & $0.059$ & $(4,0.15)$ & $0.816$ & $0.024$ \\
$0.50$ & $0.964$ & $0.028$ & $(2,0.70)$ & $0.962$ & $0.022$ \\
\bottomrule
\end{tabular}
\end{table}

\section{Conclusion}
\label{sec:conclusion}

We posed budgeted sign-flip scheduling under a model-free constraint on
the autocovariance of the transmitted innovation. Damage and exposure
are separate functionals of the schedule, one fixed by where the firing
set sits in magnitude space and the other by how it is arranged in time,
so the constrained optimum thresholds a corrected score and degenerates
to the memoryless rule precisely under conditional sign symmetry. The
resulting rule keeps the pathwise magnitude-stealth certificate and pays
for memory in calibration rather than in the guarantee: the run-start
rate keeps its distribution-free bound at every window length while the
realized rate is a plug-in quantity of accuracy $\Neff\approx N/L$,
which is what bounds the usable window.


\renewcommand{\bibfont}{\scriptsize}
\printbibliography

\end{document}